\documentclass[journal,onecolumn]{IEEEtran}
\setlength{\parindent}{1em}
\usepackage{cite}
\usepackage{amsfonts}
\usepackage{mathrsfs}
\usepackage{graphicx}
\usepackage{amssymb}
\usepackage{latexsym}
\usepackage{amsmath}
\usepackage{stfloats}
\usepackage{cases}
\usepackage{setspace}
\usepackage{bigstrut}
\usepackage{bm}
\usepackage{url}
\usepackage{color}
\usepackage{algorithm}
\usepackage{algorithmic}

\ExecuteOptions{dvips}
\textheight=23.6cm



\newtheorem{theorem}{Theorem}
\newtheorem{assumption}{Assumption}

\begin{document}

\title{FDD Massive MIMO Based on Efficient Downlink Channel Reconstruction}

\author{Yu~Han$^\ast$, Qi Liu$^\ast$, Chao-Kai~Wen$^\dagger$, Shi~Jin$^\ast$, and~Kai-Kit~Wong$^\S$\\
$^\ast$National Mobile Communications Research Laboratory, Southeast University, P. R. China\\
$^\dagger$Institute of Communications Engineering, National Sun Yat-sen University, Taiwan\\
$^\S$University College London, London, United Kingdom\\
Email: \{hanyu,qiliu\}@seu.edu.cn, chaokai.wen@mail.nsysu.edu.tw, \\jinshi@seu.edu.cn, kai-kit.wong@ucl.ac.uk}

\maketitle

\vspace{-5mm}

\begin{abstract}
Massive multiple-input multiple-output (MIMO) systems deploying a large number of antennas at the base station considerably increase the spectrum efficiency by serving multiple users simultaneously without causing severe interference. However, the advantage relies on the availability of the downlink channel state information (CSI) of multiple users, which is still a challenge in frequency-division-duplex transmission systems. This paper aims to solve this problem by developing a full transceiver framework that includes downlink channel training (or estimation), CSI feedback, and channel reconstruction schemes. Our framework provides accurate reconstruction results for multiple users with small amounts of training and feedback overhead. Specifically, we first develop an enhanced Newtonized orthogonal matching pursuit (eNOMP) algorithm to extract the frequency-independent parameters (i.e., downtilts, azimuths, and delays) from the uplink. Then, by leveraging the information from these frequency-independent parameters, we develop an efficient downlink training scheme to estimate the downlink channel gains for multiple users. This training scheme offers an acceptable estimation error rate of the gains with a limited pilot amount. Numerical results verify the precision of the eNOMP algorithm and demonstrate that the sum-rate performance of the system using the reconstructed downlink channel can approach that of the system using perfect CSI.
\end{abstract}

\begin{keywords}
Downlink channel reconstruction, FDD massive MIMO, multiuser transmission.
\end{keywords}

\section{Introduction}\label{Sec:Introduction}

Massive multiple-input multiple-output (MIMO) is a key enabler of the fifth-generation and future mobile communication networks \cite{Andrews2014,Larsson2014,Agiwal2016}. Large-scale antenna arrays are equipped at the base stations (BSs) to fully exploit the spatial degrees of freedom, for providing huge room for spatial division multiplexing \cite{Adhikary2013,Mohammed2013,Yu2018}. Multiple users can be served by the BS on the same time-frequency resource block, and the spatial multiplexing dimension can be further expanded by scaling up the array at the BS. The large array is usually structured in a planar or circular topology to exploit horizontal and vertical spaces and realize three dimensional (3D) MIMO techniques. A beam formed by the array can flexibly target at any direction in the 3D space depending on practical requirements \cite{Nam2013,Halbauer2013,Li2016}. We can also design beam weights to produce a set of spatially orthogonal beams using the large array and transmit multiple data streams on these beams without causing interference \cite{Sun2015}. These advantages result in the high sum-rate performance of multiuser massive MIMO systems.

A prerequisite to gain these advantages is the acquisition of the channel state information (CSI). Due to the lack of uplink-downlink reciprocity in frequency-division-duplex (FDD) systems, downlink training-then-feedback is a typical solution for downlink channel estimation. In the fourth-generation era and before, the number of BS antennas is relatively small. Thus, downlink CSI can be easily acquired by sending orthogonal downlink pilots, applying channel estimation at the user side, and finally feeding the estimates back to the BS. However, in massive MIMO systems, using completely orthogonal downlink pilots and sending back high-dimensional complex channel matrices are impractical. Obtaining downlink CSI at the BS side becomes a bottleneck in FDD massive MIMO systems. Therefore, researchers have been searching for new solutions to obtain downlink CSI and design corresponding transmission schemes.

{\bf Related Work:}
In this area, some methods followed the traditional approach by transmitting downlink pilots and sending back the estimates to the BS. For example, compressed sensing was introduced to estimate the sparse channel through a small amount of downlink measurements \cite{Gao2015,Rao2014}. However, comprehensive signal processing was conducted at the user side, which raised an exorbitant requirement on the capability of the user equipment. Moreover, time-correlation of the wireless channel can be utilized in downlink training and feedback phases. In \cite{Choi2014}, the user estimated downlink CSI based on the currently received downlink pilots and the estimated CSI obtained at the previous moment. Before feeding back the estimates, \cite{Choi2015} and \cite{Shen2017} proposed to quantize the channel based on previous results within the coherence time by using a trellis-extended codebook and an angle of departure-adaptive subspace codebook. However, these methods \cite{Choi2014,Choi2015,Shen2017} relied on the accuracy of the initial estimates.

In recent years, the idea of using the spatial reciprocity between uplink and downlink has attracted increasing attention. Many efforts have suggested to acquire downlink CSI by using the information obtained from the uplink. Existing work generally aims to obtain two types of downlink CSI. The first type is partial CSI, such as the spatial information or the reduced dimensional channel. For example, only the angles of propagation paths are estimated during the training phase. Alternatively, channel sparsity in beamspace is utilized and spatial angle estimation is translated to the search for the non-zero elements in the beamspace channel. In this context, only spatial directions or beam indices are known at the BS, and user scheduling is required to avoid spatial overlapping among different users \cite{Sun2015,Han2017}. The second type is full CSI, which describes the full-dimensional channel and contains the complete information in the propagation environment. Full CSI is usually obtained by channel estimation or reconstruction scheme. With full CSI, the BS can serve a large number of users simultaneously by using precoding to eliminate the interference. The BS also can conduct a comprehensive user scheduling scheme to maximize the sum-rate performance and fully exploit the spatial multiplexing gain.

Most recent works aim to acquire the full CSI \cite{Ugurlu2016,Xie2018,Haghighatshoar2018,Khalilsarai2018,Xie2017}. Focusing on the clustering channel that covers a continuous angular region, \cite{Xie2018,Haghighatshoar2018,Khalilsarai2018} suggested to estimate the channel based on the downlink channel covariance matrix that describes the angular domain energy distribution and can be derived from its uplink version. For the limited scattering channel where multiple distinct paths exist, the authors in \cite{Xie2017} proposed a unified transmission strategy with the aid of the spatial basis expansion model. Using channel sparsity, the reduced dimensional angular-domain downlink channel from each user was first estimated and then transformed back to the full-dimensional antenna domain. These methods are based on the angular domain that uniformly over-samples the space and estimate the power distribution on these sample points. If the paths are distinguishable, then the number of sample points with distinct projected power is usually larger than that of the real paths. As a result, the training overhead used to obtain the power distribution on these sample points increases.

To save the training resources, \cite{Zhang2018,Han2018wocc,Han2018arxiv,Liu2018} considered the real paths and reconstructed the downlink channel by extracting each component path. They proposed to estimate frequency-independent parameters in the uplink. Then, only the downlink gains should be estimated through downlink training and feedback, and the amount of pilots and feedback overhead was small. The effectiveness of the proposed schemes were also demonstrated in over-the-air tests. However, \cite{Zhang2018} did not consider delays, set restriction on the estimated number of paths, and only retained four dominant paths. \cite{Han2018wocc,Han2018arxiv,Liu2018} merely provided a channel reconstruction scheme to obtain the downlink channel of a single user.
Given that \cite{Han2018wocc,Han2018arxiv,Liu2018} used \emph{user-dedicated} pilots for downlink training, if we simply extend the method to fit multiuser scenarios, then a large amount of training resource is needed to estimate downlink gains of multiuser channels, as designed in \cite{Zhang2018}. Moreover, the schemes proposed in \cite{Han2018wocc} and \cite{Han2018arxiv} are inapplicable in massive 3D-MIMO systems because they can not extract the full-dimensional spatial parameters.

{\bf Contributions:}
This paper considers the multiuser massive MIMO system using orthogonal frequency-division-multiplexing (OFDM) technique under FDD mode.
By resolving the drawbacks in \cite{Han2018wocc,Han2018arxiv,Liu2018} for multiuser scenarios, we present a full transceiver framework for downlink channel reconstruction with high reconstruction precision and low training and feedback overhead. Following the mechanism in \cite{Han2018wocc,Han2018arxiv,Liu2018}, we also reconstruct the downlink channel by extracting each component path.
Specifically, our transceiver framework combines three components as described below.
\begin{enumerate}
  \item {\bf Frequency-independent parameter extraction:} We develop an enhanced Newtonized orthogonal matching pursuit (eNOMP) algorithm that can detect the component paths from their noisy mixture and extract the gain, downtilt, azimuth, and delay of each path for massive MIMO OFDM systems. Part of this work will be shown in the conference version of this paper \cite{Liu2018}. Numerical results show that eNOMP can provide precise extraction results. eNOMP-based uplink channel reconstruction also achieves smaller mean square error (MSE) than the linear minimum MSE (LMMSE) estimation.

  \item {\bf Pilot scheduling:} After obtaining the frequency-independent parameters (i.e., downtilt, azimuth, and delay) of each path from the uplink channels, the subsequent task is to estimate the downlink gains through the downlink pilots. The best way to estimate each downlink gain is to apply beamforming in downlink training process and send the pilot aligning with each downtilt and azimuth direction (i.e., dedicated pilot). However, this approach using the dedicated pilots will result in large training overhead in the multiuser scenario. To solve this problem, we propose to co-use and remove some pilots (or beams) while slightly compromising the estimation performance.
      In particular, we introduce an approximation of the MSE for the channel gains to predict the corresponding performance degradation for determining which pilots (or beams) can be removed. We use the greedy method to exclude the unnecessary pilots.
      The proposed pilot scheduling scheme can minimize the downlink training overhead and offer an acceptable estimation error rate.

  \item {\bf Multiuser downlink gain estimation and CSI feedback:} After determining the pilots, we use them in a broadcast manner instead of the dedicated manner. That is, all the users can use the broadcast pilots. However, if each user directly sends the channel response on each corresponding pilot back to the BS, then the amount of feedback remains large. To reduce the feedback overhead, the BS sends the corresponding frequency-independent parameters to each user. Thereafter, the user can estimate the downlink gain of each path on the basis of the downtilt, azimuth, and delay, and send the estimates back to the BS.
      Given that the number of paths in each user channel is much smaller than the pilot numbers, the feedback overhead of the proposed scheme is also very small. We evaluate the multiuser sum-rate performance through theoretical analysis and simulations and observe that the sum-rate offered by the reconstruction channel remains large. This finding validates the effectiveness and efficiency of the proposed downlink channel reconstruction scheme.
\end{enumerate}

The rest of the paper is organized as follows. Section \ref{Sec:SystemModel} describes the massive MIMO-OFDM system operating in FDD mode, introduces the uplink and downlink channel models in terms of the frequency-independent parameters, and briefly outlines the proposed downlink channel reconstruction and multiuser transmission scheme. Section \ref{Sec:NOMP} presents the eNOMP algorithm, and highlights the new codebook and the updated Newton step designed for the 3D massive MIMO-OFDM scenario. Section \ref{Sec:DLtraining} presents the working principle of the proposed low-cost downlink-training strategy for the multiuser system. Section \ref{Sec:MUscheme} applies the reconstruction results to the downlink multiuser transmission scheme and analyzes the sum-rate performance. Section \ref{Sec:Results} provides the numerical results, and Section \ref{Sec:Conclusion} elaborates the conclusions.

\emph{Notations}---We denote matrices and vectors by uppercase and lowercase boldface letters, respectively. By contrast, the superscripts $(\cdot)^\dag$, $(\cdot)^{H}$, and $(\cdot)^{T}$ denote pseudo-inverse, conjugate-transpose, and transpose, respectively. We also denote $[{\bf A}]_{i,:}$ and $[{\bf A}]_{:,j}$ as the $i$th row and the $j$th column of matrix $\bf A$, and $[{\bf A}]_{i,j}$ as the $(i,j)$th entry of $\bf A$. $\otimes$ denotes taking Kronecker product. $\mathcal{R}\{\cdot\}$ represents taking the real part of a complex number, whereas $\mathbb{E}\{\cdot\}$ takes the expectation with respect to the random variables inside the brackets. $\left| \cdot \right|$ and $\left\| \cdot \right\|$ indicate taking the absolute value and modulus operations, and $\left\lfloor \cdot \right\rfloor$ and $\left\lceil \cdot \right\rceil$ imply rounding a decimal number to its nearest lower and higher integers, respectively.

\section{System Model}\label{Sec:SystemModel}

\subsection{Channel Model}

\begin{figure}
  \centering
  \includegraphics[scale=0.54]{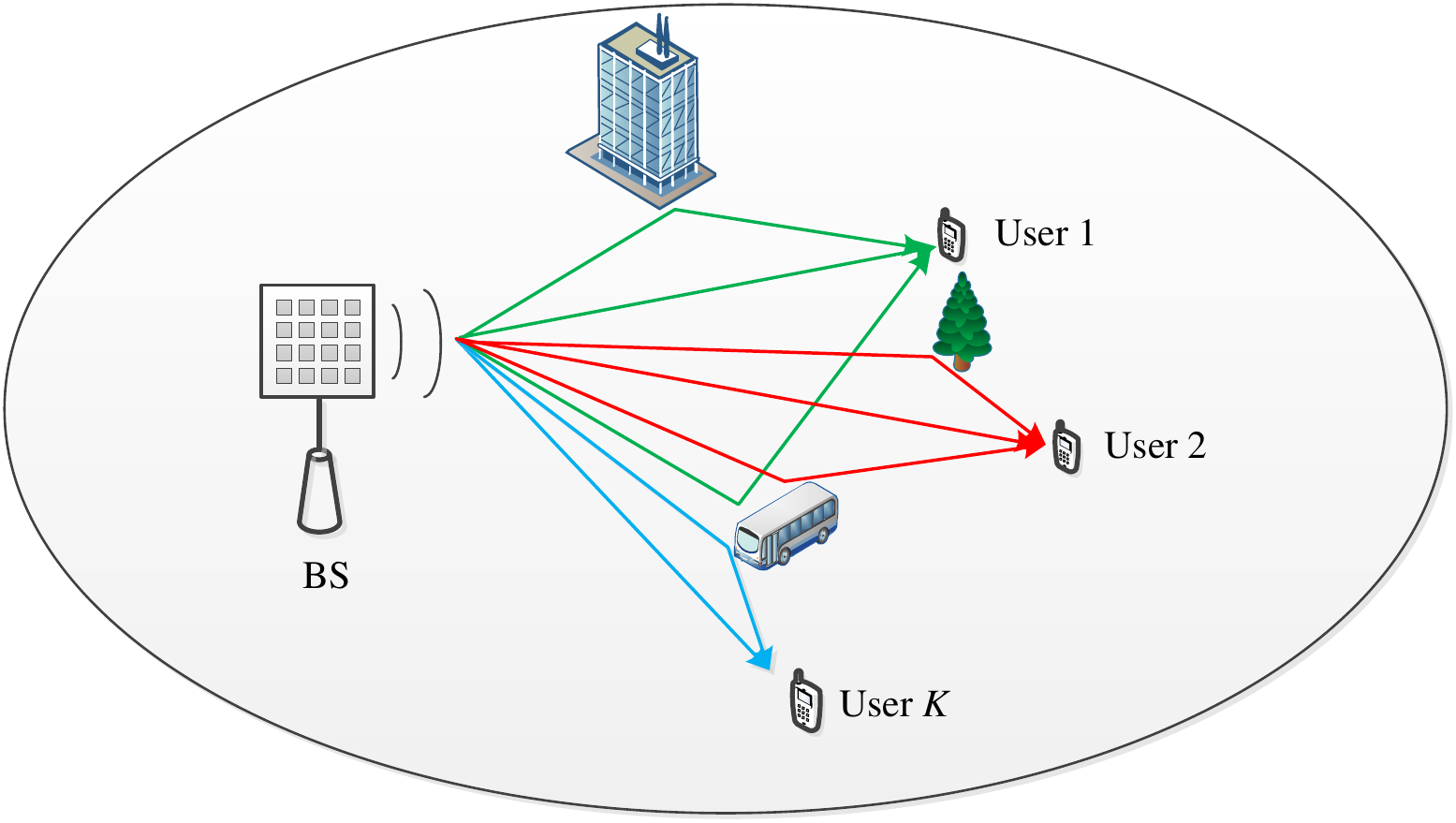}\\
  \caption{FDD massive MIMO system. BS is equipped with UPA and serves $K$ users simultaneously. Buildings, trees, and cars are all scatterers in the wireless channel.}\label{Fig:MUscenario}
\end{figure}

A single-cell massive MIMO system using FDD transmission mode and OFDM modulation is considered. We denote the uplink and downlink carrier frequencies as $f^{\rm ul}$ and $f^{\rm dl}$, respectively. We also assume that each of the uplink and downlink frequency bands has $N$ sub-carriers with spacing $\triangle f$. The BS serves $K$ users who are randomly distributed in the cell. The BS is equipped with a UPA, and each user has a single antenna. The UPA contains $M=M_hM_v$ antenna elements, including $M_h$ elements in each row and $M_v$ elements in each column, and $M\gg K$. The distance between two horizontally or vertically adjacent elements is $d=\lambda/2$, where $\lambda$ is the carrier wavelength. As shown in Fig.~\ref{Fig:MUscenario}, scatterers exist in the space, and the user channel is composed of multiple propagation paths. The wireless signal can arrive at the user side along with the line-of-sight path or be reflected by several scatterers. Different user channels may share a common scatterer and are spatially overlapped with each other. The wireless channel of each user remains constant over $T_c$ OFDM symbols.

For user $k$, when down-converted to the baseband, its uplink multipath channel is expressed as
\begin{equation}\label{Eq:ULChannel}
{\bf h}^{\rm ul}_k=\sum\limits_{l = 1}^{{L_k}} {g^{\rm ul}_{k,l} {\bf a}(\theta_{k,l},\phi_{k,l}) \otimes {\bf p} (\tau_{k,l})},
\end{equation}
where $L_k$ is the number of propagation paths of user $k$; ${\bf a} (\theta_{k,l},\phi_{k,l})$ is the steering vector of UPA with
$\theta_{k,l} \in [-\pi/2,\pi/2)$ and $\phi_{k,l} \in [-\pi/2,\pi/2)$ being the downtilt and the azimuth of the $l$th propagation path of user $k$, respectively;
\begin{equation}\label{Eq:vecp}
{\bf p} (\tau)= {\left[1, e^{j2\pi \triangle f \tau},\ldots,e^{j2\pi (N-1)\triangle f \tau}\right]^T}
\end{equation}
is the delay vector on the OFDM sub-carriers; and $\tau_{k,l}$ is the delay of the $l$th propagation path of user $k$. Here, the steering vector of UPA can be explicitly expressed as
\begin{equation}\label{Eq:ULAstevec}
\begin{aligned}
{\bf a} (\theta,\phi)&={\bf a}_v (\theta)\otimes {\bf a}_h (\theta,\phi), \\
{\bf a}_v (\theta)&= {\left[1,e^{j2\pi \frac{d}{\lambda} \sin\theta}, \ldots, e^{j2\pi (M_v-1)\frac{d}{\lambda} \sin\theta}\right]^{T}},\\
{\bf a}_h (\theta,\phi)&= {\left[1,e^{j2\pi \frac{d}{\lambda} \cos\theta\sin\phi}, \ldots, e^{j2\pi (M_h-1)\frac{d}{\lambda} \cos\theta\sin\phi}\right]^{T}}.
\end{aligned}
\end{equation}
Given the frequency-independent nature of the propagation delays and angles, the baseband channel of user $k$ in the downlink can be modeled as
\begin{equation}\label{Eq:DLChannel}
{\bf h}^{\rm dl}_k=\sum\limits_{l = 1}^{{L_k}} {g^{\rm dl}_{k,l} {\bf a}^T(\theta_{k,l},\phi_{k,l})\otimes {\bf p}^T(\tau_{k,l}) e^{j2\pi (f^{\rm dl}-f^{\rm ul}) \tau_{k,l}}},
\end{equation}
where $g^{\rm dl}_{k,l}$ is the downlink complex gain of the $l$th propagation path in the $k$th user's channel.

\subsection{Efficient Reconstruction of Downlink Channels}

In the downlink of the FDD massive MIMO system, the BS needs to reconstruct the downlink channel for each user before conducting data transmission.
Comparison between \eqref{Eq:ULChannel} and \eqref{Eq:DLChannel} shows that the uplink and downlink share the same downtilts, azimuths, and delays while each path experiences different phase shifts [i.e., the last term in \eqref{Eq:DLChannel}] because of the carrier frequency shift. We also find that $g^{\rm dl}_{k,l}$ may not be equal to $g^{\rm ul}_{k,l}$ when reflection occurs during propagation \cite{Imtiaz2015}.
On the basis of the observation above, the goal of reconstructing the downlink channel is translated to extracting the frequency-independent parameters $\{ \theta_{k,l}, \phi_{k,l}, \tau_{k,l} \}$ and estimating the downlink gains $\{ g^{\rm dl}_{k,l} \}$, for $l=1,\ldots,L_k$ and $k=1,\ldots,K$. The frequency-independent parameters are more efficient to be extracted from the uplink channels than the downlink channels.
We develop the following transceiver framework for downlink channel reconstruction and data transmission. Fig.~\ref{Fig:WorkingProcess} shows the working process of the transceiver, which consists of the following phases:

\begin{figure}
  \centering
  \includegraphics[scale=0.85]{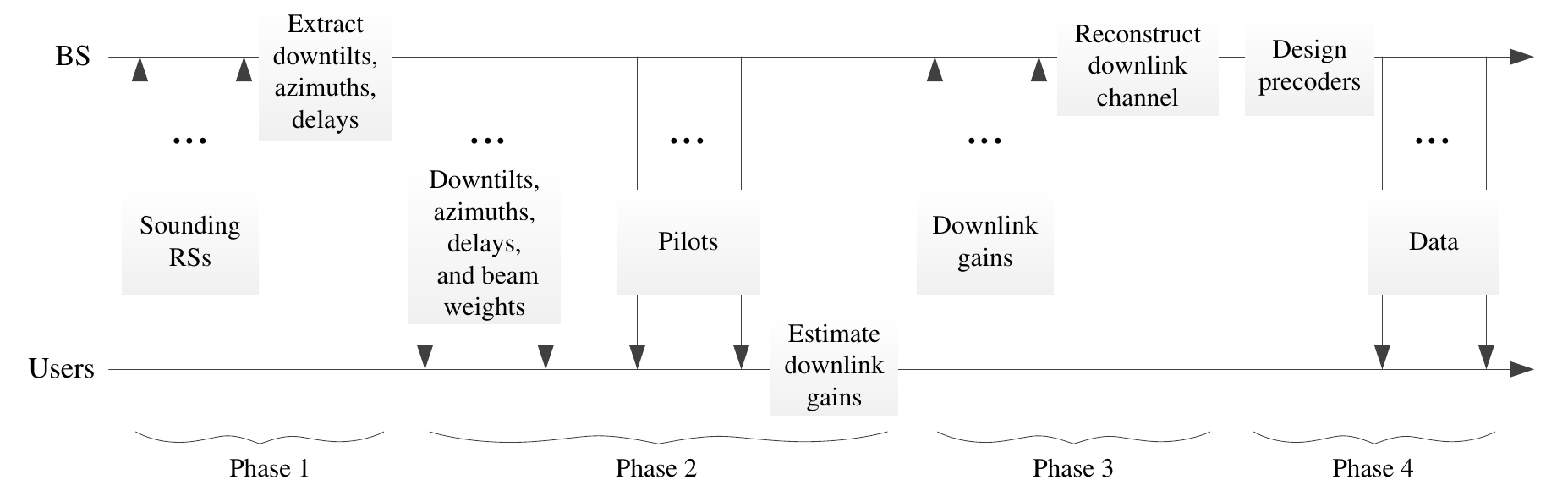}\\
  \caption{Working process of the FDD massive MIMO transceiver.}\label{Fig:WorkingProcess}
\end{figure}

\textbf{ 1) Frequency-independent parameter extraction:}
The users send sounding reference signals (RSs) to the BS. The BS extracts the frequency-independent parameters of the channel from the sounding RSs.

\textbf{ 2) Downlink gain estimation:}
The BS transmits the estimated downtilts, azimuths, delays, and the weights of downlink-training beams to the users and then broadcasts downlink-training pilots.
Each user estimates its downlink gains of the propagation paths on the basis of the known downtilts, azimuths, delays, and beam weights.

\textbf{ 3) Downlink channel reconstruction:}
Users send the estimated downlink gains to the BS. On the basis of the spatial reciprocity, the BS reconstructs the multiuser channel by applying the frequency-independent parameters and the downlink gains in \eqref{Eq:DLChannel}.

\textbf{ 4) Downlink data transmission:}
With full CSI at the transmitter, the BS designs interference-cancellable precoders using ${\bf h}^{\rm dl}_k$, $k=1,\ldots,K$, and maximizes the spatial multiplexing gain by serving all the $K$ users simultaneously in the downlink.


The channel reconstruction through the procedures above is reasonable.
However, the use of a large-scale UPA array and the existence of multiple users bring new challenges:
\begin{itemize}
  \item Three kinds of frequency-independent parameters need to be estimated in Phase 1.
  \item We should estimate downlink gains for multiple users with an acceptable amount of downlink training overhead in Phase 2 to ensure that sufficient time resources are retained for the multiuser data transmission in Phase 4.
\end{itemize}
The two key problems will be solved in the following sections.

\section{Extraction of Frequency-independent Parameters in the Uplink}\label{Sec:NOMP}

In this section, we focus on the first challenge mentioned above and obtain the delays, azimuths, and downtilts of each user channel for massive MIMO-OFDM systems.

\subsection{Uplink RS Model}

\begin{figure}
  \centering
  \includegraphics[scale=0.9]{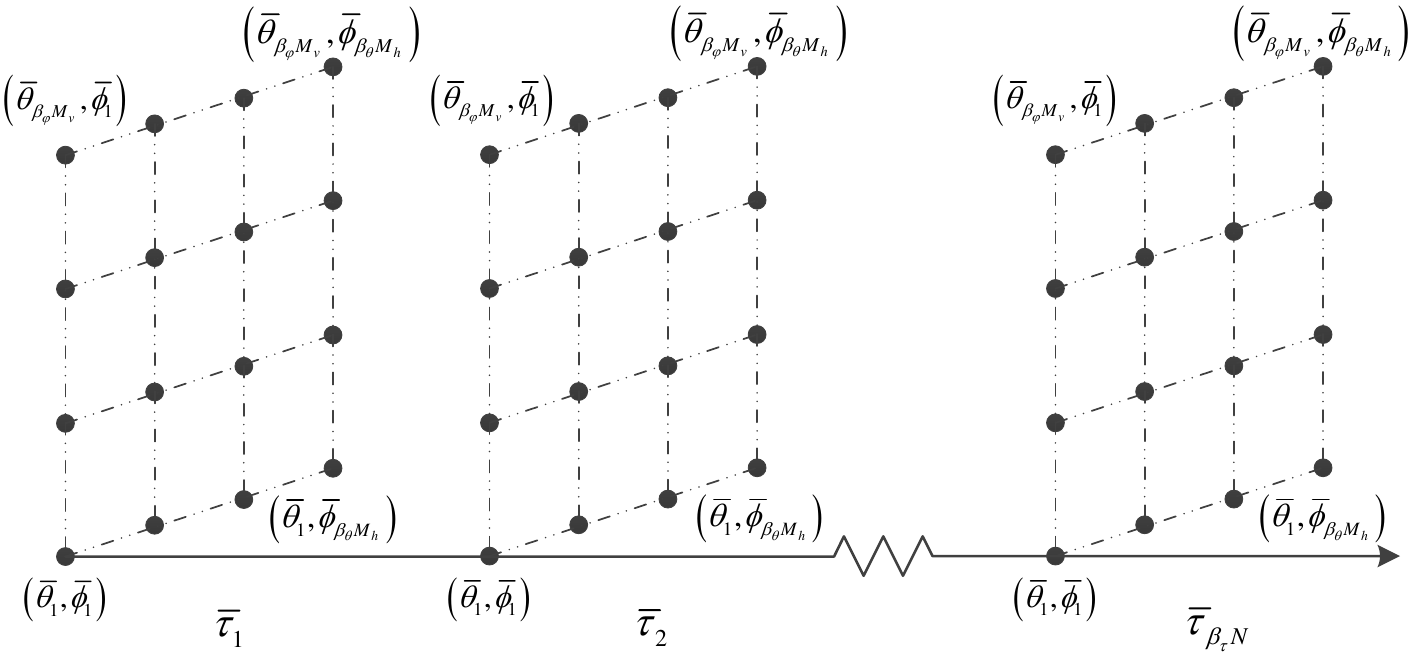}\\
  \caption{Codebook used in the eNOMP algorithm. Each vertical plane represents a lower dimensional sub-codebook that covers the sampled downtilts and azimuths on a sampled delay.}\label{Fig:Codebook}
\end{figure}

During the uplink sounding phase, each user sends sounding RSs to the BS. Sounding RSs from different users are time separated and the BS can distinguish sounding RSs from different users. We assume that the sounding RS from user $k$ occupies the $k$th OFDM symbol in the uplink slot and that all-ones sounding RSs are applied. The received sounding RS at the BS from user $k$ is expressed as
\begin{equation}\label{Eq:SRSmodel}
{\bf y}^{\rm ul}_k = \sum\limits_{l = 1}^{L_k}{g^{\rm ul}_{k,l} {\bf a}(\theta_{k,l},\phi_{k,l}) \otimes {\bf p} (\tau_{k,l})}+{\bf z}^{\rm ul}_k,
\end{equation}
where ${\bf z}^{\rm ul}_k \in \mathbb{C}^{MN\times 1}$ is the additive noise vector on all subcarriers of OFDM symbol $k$ and on all antenna elements at the BS, and each element of ${\bf z}^{\rm ul}_k$ is independent and identically distributed (i.i.d.)~with zero mean and unit variance. We aim to extract $\{\tau_{k,l}, \theta_{k,l}, \phi_{k,l}\}_{l=1,\ldots,L_k}$ from ${\bf y}^{\rm ul}_k$.

This frequency (in delay and angular domain) estimation problem can be solved by utilizing the NOMP algorithm that detects frequencies from the noisy mixture of multiple sinusoids. However, the original NOMP algorithm in \cite{Mamandipoor2016} and the extended NOMP algorithm proposed in \cite{Han2018arxiv} do not cover the case with three types of frequencies to be extracted. Hence, in this study, we further develop an eNOMP algorithm to cater for the massive MIMO-OFDM system.

\subsection{eNOMP for Massive MIMO-OFDM Systems}

eNOMP is an iteration-based algorithm that extracts a new component within each iteration through the e-OMP and e-Newton steps. We will describe the two steps later.
At the end of the $i$th iteration of the eNOMP algorithm, the $i$th component path will be removed from the noisy mixture. If the parameters are precisely estimated, then the $i$th component path will be completely eliminated and the residual noisy mixture will be minimized.
When the algorithm terminates, the number of extracted components equals the number of practical components if each component is accurately estimated.

We give a detailed description of the e-OMP and e-Newton steps in the $i$th iteration of the eNOMP algorithm.
\subsubsection{e-OMP Step}
At the beginning, the residual noisy mixture is expressed as
\begin{equation}\label{Eq:NOMPresidual}
{\bf y}^{\rm ul}_{{\rm r},k}(i) = {\bf y}^{\rm ul}_k - \sum\limits_{l = 1}^{{i-1}}{{\hat g}^{\rm ul}_{k,l} {\bf a}(\hat\theta_{k,l},\hat\phi_{k,l}) \otimes {\bf p} (\hat\tau_{k,l})},
\end{equation}
where $\hat g^{\rm ul}_{k,l}$, $\hat\theta_{k,l}$, $\hat\phi_{k,l}$, and $\hat\tau_{k,l}$ are the gain, downtilt, azimuth, and delay estimated in the $l$th iteration, respectively.
In the e-OMP step, we exhaustively search a pre-defined codebook that occupies the value regions of the downtilt, azimuth, and delay to find the codeword that best matches ${\bf y}^{\rm ul}_{{\rm r},k}(i)$, and use the downtilt, azimuth, and delay included in this codeword as the coarse estimates of downtilt, azimuth, and delay of the extracted $i$th path.

To fit the massive MIMO-OFDM scenario, we design the codebook in accordance with the structure of a component path in \eqref{Eq:SRSmodel}. A codeword is expressed as
\begin{equation}\label{Eq:NOMPcodeword}
{\bf c}(\bar\theta,\bar\phi,\bar\tau) = {\bf a}(\bar\theta,\bar\phi) \otimes {\bf p} (\bar\tau),
\end{equation}
where $\bar\theta \in [-\pi/2,\pi/2)$, $\bar\phi \in [-\pi/2,\pi/2)$, and $\bar\tau \in [0,1/{\triangle f})$ are the downtilt, azimuth, and delay that ${\bf c}$ represents.
The codebook covers the 3D space and the delay domain. Thus, we sample the angles and delays uniformly as
\begin{equation}\label{Eq:NOMPgrids}
\begin{aligned}
\bar\theta &\in \left\{ \bar\theta_1=-\frac{\pi}{2},\bar\theta_2=-\frac{\pi}{2}+\frac{\pi}{\beta_{\theta}M_v},\ldots,\bar\theta_{\beta_{\theta}M_v}=-\frac{\pi}{2}+\frac{(\beta_{\theta}M_v-1)\pi}{\beta_{\theta}M_v}\right\},\\
\bar\phi &\in \left\{ \bar\phi_1=-\frac{\pi}{2},\bar\phi_2=-\frac{\pi}{2}+\frac{\pi}{\beta_{\phi}M_h},\ldots,\bar\phi_{\beta_{\phi}M_h}=-\frac{\pi}{2}+\frac{(\beta_{\phi}M_h-1)\pi}{\beta_{\phi}M_h}\right\},\\
\bar\tau &\in  \left\{ \bar\tau_1=0, \bar\tau_2=\frac{1}{\beta_{\tau}N\triangle f},\ldots,\bar\tau_{\beta_{\tau}N}=\frac{\beta_{\tau}N-1}{\beta_{\tau}N\triangle f}\right\},
\end{aligned}
\end{equation}
where $\beta_{\theta}$, $\beta_{\phi}$, and $\beta_{\tau}$ are the over-sampling rates of the downtilt, azimuth, and delay, respectively. The black circles in Fig.~\ref{Fig:Codebook} are the codewords. Each codeword points to a sampled spatial direction and covers a sampled delay.
The e-OMP step selects the codeword with the maximum projected power from ${\bf y}^{\rm ul}_{{\rm r},k}(i)$, that is,
\begin{equation}\label{Eq:NOMPstep1match}
{\bf c}(\hat\theta_{k,i},\hat\phi_{k,i},\hat\tau_{k,i}) = \mathop{\arg\max}_{(\bar\theta,\bar\phi,\bar\tau)}\frac{|{\bf c}^H(\bar\theta,\bar\phi,\bar\tau) {\bf y}^{\rm ul}_{{\rm r},k}(i)|^2}{\|{\bf c}(\bar\theta,\bar\phi,\bar\tau)\|^2},
\end{equation}
where $\hat\theta_{k,i}$, $\hat\phi_{k,i}$, and $\hat\tau_{k,i}$ are the coarsely estimated downtilt, azimuth, and delay of the $i$th component path.
Then, the gain of the $i$th component path is calculated by
\begin{equation}\label{Eq:NOMPstep1gain}
\hat g^{\rm ul}_{k,i} = \frac{{\bf c}^H(\hat\theta_{k,i},\hat\phi_{k,i},\hat\tau_{k,i}) {\bf y}^{\rm ul}_{{\rm r},k}(i)}{\|{\bf c}(\hat\theta_{k,i},\hat\phi_{k,i},\hat\tau_{k,i})\|^2}.
\end{equation}

Notably, the total number of codewords equals $\beta_{\theta}M_v \times \beta_{\phi}M_h\times\beta_{\tau}N$. Increasing the over-sampling rates helps improve the matching between ${\bf y}^{\rm ul}_{{\rm r},k}(i)$ and ${\bf c}(\hat\theta_{k,i},\hat\phi_{k,i},\hat\tau_{k,i})$ and enhance the precision of the coarse estimates of azimuth, downtilt, and delay. However, this step also multiplies the search time and severely degrades the efficiency, especially when $M_v$, $M_h$, and $N$ are large. Therefore, the over-sampling rates are usually small when we design the codebook for massive MIMO systems.

\subsubsection{e-Newton Step}

Before removing the $i$th component path from the noisy mixture at the final stage of e-OMP, the e-Newton step is applied to tackle the off-grid effect and adjust the estimates toward the real values. Newton's method can successively find better approximations to the roots of a function \cite{Deuflhard2004}. The goal of minimizing the residual noisy mixture is realized by maximizing
\begin{equation}\label{Eq:NOMPS}
S(\theta,\phi,\tau)=2\Re\left\{{\bf y}^{{\rm ul}H}_{{\rm r}}(i) g^{\rm ul}{\bf c}(\theta,\phi,\tau)\right\} - \|g^{\rm ul}{\bf c}(\theta,\phi,\tau)\|^2.
\end{equation}
The e-Newton step is designed to refine the downtilt, azimuth, and delay simultaneously by
\begin{equation}\label{Eq:NOMPstep2}
\left[ \begin{matrix} {\hat\theta'}_{k,i}\\ {\hat\phi'}_{k,i} \\ {\hat\tau'}_{k,i} \end{matrix} \right]= \left[ \begin{matrix} {\hat\theta}_{k,i}\\ {\hat\phi}_{k,i} \\ {\hat\tau}_{k,i} \end{matrix} \right] - {\ddot{\bf S}\left(\hat\theta_{k,i},\hat\phi_{k,i},\hat\tau_{k,i}\right)}^{-1} {\dot{\bf S}\left(\hat\theta_{k,i},\hat\phi_{k,i},\hat\tau_{k,i}\right)},
\end{equation}
where
\begin{equation}\label{Eq:NOMPdotsS}
\dot{\bf S}\left(\theta,\phi,\tau\right)=\left[\begin{matrix} {\frac{\partial S}{\partial\theta}} \\ \frac{\partial S}{\partial\phi} \\ \frac{\partial S}{\partial\tau} \end{matrix} \right], \quad
\ddot{\bf S}\left(\theta,\phi,\tau\right)=\left[ \begin{matrix} \frac{\partial^2 S}{\partial\theta^2} & \frac{\partial^2 S}{\partial\theta\partial\phi} & \frac{\partial^2 S}{\partial\theta\partial\tau}\\ \frac{\partial^2 S}{\partial\phi\partial\theta} & \frac{\partial^2 S}{\partial\phi^2} & \frac{\partial^2 S}{\partial\phi\partial\tau}\\
\frac{\partial^2 S}{\partial\tau\partial\theta} & \frac{\partial^2 S}{\partial\tau\partial\phi} & \frac{\partial^2 S}{\partial\tau^2} \end{matrix} \right].
\end{equation}
In \eqref{Eq:NOMPS}, we regard ${\bf y}_{\rm r}^{{\rm ul}}$ and $g^{\rm ul}$ as constant, and the derivation of $S$ is transformed to the derivation of the codeword $\bf c$. We take the partial derivatives of $S$ versus $\theta$ as the examples. The first-order partial derivative is calculated as
\begin{equation}\label{Eq:1deriStheta}
\frac{{\partial {S}}}{{\partial \theta }}= 2\Re \left\{ {{\bf y}^{{\rm ul}H}_{{\rm r}} g^{\rm ul} \frac{{\partial \mathbf{c}}}{{\partial \theta }}}  - {\left| g^{\rm ul} \right|^2}{ \mathbf{c}^H {\frac{{\partial \mathbf{c}}}{{\partial \theta }}} }\right\}.
\end{equation}
The second-order partial and cross partial derivatives are
\begin{equation}\label{Eq:2deriStheta2}
\frac{{\partial^2 {S}}}{{\partial \theta^2 }}= 2\Re \left\{ \left({\bf y}^{{\rm ul}H}_{\rm r} g^{\rm ul} - \left| g^{\rm ul} \right|^2 \mathbf{c}^H \right)\frac{\partial^2 \mathbf{c}}{\partial \theta^2}\right\} - 2\left\|g^{\rm ul}  \frac{\partial \mathbf{c}}{\partial \theta}\right\|^2,
\end{equation}
and
\begin{equation}\label{Eq:2deriSthetaphi}
\frac{{\partial^2 {S}}}{{\partial \theta \partial \phi}}= 2\Re \left\{ \left({\bf y}^{{\rm ul}H}_{\rm r} g^{\rm ul} - \left| g^{\rm ul} \right|^2 \mathbf{c}^H \right)\frac{\partial^2 \mathbf{c}}{\partial \theta \partial \phi}\right\}- 2\Re \left\{ \left| g^{\rm ul} \right|^2 \frac{\partial \mathbf{c}^H}{\partial \phi} \frac{\partial \mathbf{c}}{\partial \theta}\right\},
\end{equation}
respectively. Other derivatives can be derived in a similar way. From \eqref{Eq:NOMPstep2}, we see that the effectiveness of the e-Newton step greatly depends on the initial values, that is, the coarse estimates obtained in the e-OMP step. If the initial values are not far from the true maximum values, then the e-Newton-refined values are closer to global maxima than local maxima, thereby increasing the precision of the estimates of the delay, azimuth, and downtilt. This is one of the reasons why we use oversampled codebook in the e-OMP step.
The gain of the $i$th component path will be updated through \eqref{Eq:NOMPstep1gain} by replacing $(\hat\theta_{k,i},\hat\phi_{k,i},\hat\tau_{k,i})$ with $({\hat\theta'}_{k,i},{\hat\phi'}_{k,i},{\hat\tau'}_{k,i})$. Thereafter, the refined $i$th component path will be removed from ${\bf y}^{{\rm ul}}_{{\rm r},k}(i)$.

Algorithm~\ref{DNOMP} briefly summarizes the working steps of an iteration in the eNOMP algorithm for the massive MIMO-OFDM system. The stopping criterion of the eNOMP iterations is designed on the basis of the required false alarm rate $P_{\rm fa}$ \cite{Mamandipoor2016} whose typical value range is $10^{-1}-10^{-3}$. The algorithm terminates when
\begin{equation}\label{Eq:NOMPstop}
\|\mathcal{F}\{{\bf y}^{\rm ul}_{{\rm r},k}\}\|^2_{\infty} < \ln(MN)-\ln(-\ln(1-P_{\rm fa})),
\end{equation}
where $\mathcal{F}\{\cdot\}$ represents taking Fourier transformation and $\|\cdot\|_{\infty}$ denotes the infinite norm. Note that if each path is precisely estimated, then the algorithm terminates when all the real paths are extracted and no false alarm happens, as long as the the value of $P_{\rm fa}$ is set reasonably.
If the paths extracted in early iterations are not precise enough, then the value of $P_{\rm fa}$ determines how many paths are detected. The e-NOMP algorithm terminates early if $P_{\rm fa}$ is small, and consequently less paths may be detected. Otherwise, the algorithm stops late when increasing $P_{\rm fa}$, and we obtain more paths. The later obtained paths occupy very little proportion of power of the reconstructed channel and may not exist in the real channel. Their existence helps enhance the global accuracy of the reconstructed channel.
Finally, the BS obtains the estimated frequency-independent parameters of all the user channels, which are denoted by $\{ \hat\tau_{k,l}, \hat\theta_{k,l}, \hat\phi_{k,l}\}$, where ${l=1,\ldots,\hat L_k}$, $k = 1,\ldots,K$.

\begin{algorithm}[t]
\caption{\textbf{Working Steps of An Iteration in eNOMP}}
\textbf{Step 1: New detection.} Coarsely estimate downtilt, azimuth, and delay of a component path by applying the e-OMP step. \\
\textbf{Step 2: Single Newton Refinement.} Apply e-Newton step to the newly detected component path.\\
\textbf{Step 3: Cyclic Newton Refinement.} Cyclicly apply e-Newton steps to all the detected component paths one by one.\\
\textbf{Step 4: Gains Update.} Update the gains of all the detected component paths by utilizing the refined parameters.
\label{DNOMP}
\end{algorithm}

\section{Efficient Downlink Training and Channel Reconstruction}\label{Sec:DLtraining}

The multiuser downlink channel can be reconstructed by estimating the downlink gains for each user one-by-one. However, this procedure will cost considerable downlink resources for training pilots, which is the second key problem mentioned in \emph{Section \ref{Sec:SystemModel}}. In this section, we propose an efficient training strategy that utilizes a small amount of training overhead to obtain accurate estimates of downlink gains for multiple users.

\subsection{Requirement for Successful Estimation}

We analyze the requirement for the successful estimation of downlink gains.
For a certain user in the system, the real spatial parameters are $\{\tau_l,\theta_l,\phi_l\}_{l=1,\dots,L}$ and their estimates are $\{\hat\tau_l,\hat\theta_l,\hat\phi_l\}_{l=1,\dots,\hat L}$. In this subsection, we omit the user index to simplify the expressions. During the downlink gain estimation phase, the BS transmits downlink pilots over $T_p$ successive OFDM symbols. To enhance the receiving power of pilots and improve the estimation accuracy, beamforming is applied on the pilots. The beamforming directions are altered on every OFDM symbol.

We suppose that comb-type all-ones pilots are used and these pilots are sparsely and uniformly inserted in the downlink frequency band. The received pilots on the $t$th OFDM symbol at a certain user are
\begin{equation}\label{Eq:DLpilot}
{\bf y}^{\rm dl}(t) = \sum\limits_{l = 0}^{L-1} \sqrt{P}{g^{\rm dl}_l{\bf a}^T(\theta_l,\phi_l){\bf b}_t} {\bf p}_p(\tau_l)+{\bf z}^{\rm dl}(t),
\end{equation}
where $P$ is the transmit power; ${\bf b}_t \in \mathbb{C}^{M\times1}$ is the training beam used on the $t$th OFDM symbol;
\begin{equation}\label{Eq:vecpp}
{\bf p}_p (\tau)={\left[ e^{j2\pi( f^{\rm dl}-f^{\rm ul} + n_1\triangle f) \tau},\ldots,e^{j2\pi ( f^{\rm dl}-f^{\rm ul} + n_{N_p}\triangle f)\tau}\right]^T}
\end{equation}
describes the delay on $N_p$ downlink subcarriers that are occupied by downlink pilots; and ${\bf z}^{\rm dl}(t)$ is the noise vector on the $t$th OFDM symbol with elements that are i.i.d. with zero mean and unit variance. At the user side, considering that $L$, $\theta_l$, $\phi_l$, and $\tau_l$ have been estimated in the uplink and sent to this user, we replace them by $\hat L$, $\hat\theta_l$, $\hat\phi_l$, and $\hat\tau_l$ in the following derivation. We denote
\begin{equation}\label{Eq:DLTheta}
\Theta^{l,t}={\bf a}^T(\hat\theta_l,\hat\phi_l){\bf b}_t
\end{equation}
to simplify the expressions. By stacking all the received pilots into a large vector, we obtain
\begin{equation}\label{Eq:DLpilotvec}
{\bf y}^{\rm dl}= \sqrt{P} {\bf A} {\bf g}^{\rm dl}+{\bf z}^{\rm dl},
\end{equation}
where
\begin{equation}\label{Eq:StackVecsygz}
{\bf y}^{\rm dl}=\left[\begin{matrix} {\bf y}^{\rm dl}(1) \\ \vdots \\ {\bf y}^{\rm dl}(T_p) \end{matrix} \right], \quad
{\bf g}^{\rm dl}=\left[\begin{matrix} g^{\rm dl}_{1} \\ \vdots \\ g^{\rm dl}_{\hat L} \end{matrix} \right], \quad
{\bf z}^{\rm dl}=\left[\begin{matrix} {\bf z}^{\rm dl}(1) \\ \vdots \\ {\bf z}^{\rm dl}(T_p) \end{matrix} \right]
\end{equation}
are the stacked received pilot, downlink gain, and noise vectors, respectively; and
\begin{equation}\label{Eq:DLmtxA}
{\bf A}=\left[\begin{matrix} {\bf A}(1,1) & \cdots & {\bf A}(1,\hat L) \\ \vdots &\ &\vdots \\ {\bf A}(T_p,1) &\cdots &{\bf A}(T_p,\hat L)  \end{matrix} \right]
\end{equation}
is the coefficient matrix with submatrix
\begin{equation}\label{Eq:DLmtxAsub}
{\bf A} (t,l) = e^{j2\pi (f^{\rm dl}-f^{\rm ul}) \hat\tau_l}\Theta^{l,t}{\bf p}_p(\hat\tau_l).
\end{equation}
From \eqref{Eq:DLpilot}, the least squares (LS) estimate of gains is given by
\begin{equation}\label{Eq:DLestg}
{\hat {\bf g}}^{\rm dl}= \frac{1}{\sqrt{P}} {\bf A}^{\dagger}{\bf y}=\frac{1}{\sqrt{P}}({\bf A}^H{\bf A})^{-1}{\bf A}^H{\bf y}.
\end{equation}

Evidently, the first requirement for the successful estimation of ${\bf g}^{\rm dl}$ is that ${\bf A}^H{\bf A}$ is invertible. Thus, ${\bf A}$ must have full column-rank, that is, ${\rm rank}({\bf A}) = \hat L$. An implied condition for full column-rank is that $N_pT_p \ge \hat L$, which indicates that the number of downlink pilots should be no less than the number of estimated downlink gains.
We apply singular-value decomposition on the coefficient matrix ${\bf A}$ by ${\bf A} = {\bf U \Lambda V}^H$, where ${\bf U}\in \mathbb{C}^{N_p T_p \times N_p T_p}$ and ${\bf V}\in \mathbb{C}^{\hat L \times \hat L}$ are unitary matrices and ${\bf \Lambda} \in \mathbb{C}^{N_pT_p \times \hat L}$ satisfies
\begin{equation}\label{Eq:DLmtxLembda}
{\bf \Lambda}=\left[\begin{matrix} \lambda_1 & \ & {\bf 0} \\ \ &\ddots &\ \\ {\bf 0} &\ & \lambda_{\hat L} \\ {\bf 0} & \cdots & {\bf 0} \end{matrix} \right],
\end{equation}
where $\lambda_1 \ge\ldots\ge\lambda_{\hat L}$ are the singular values. To ensure that ${\bf A}^H{\bf A} = {\bf U \Lambda}^H {\bf\Lambda V}^H$ is invertible, the smallest singular value must hold that $|\lambda_{\hat L}|^2>0$.

We denote $\underline{\bf y}^{\rm dl} = {\bf U}^H{\bf y}^{\rm dl}$, $\underline{\bf g}^{\rm dl} = {\bf V}^H{\bf g}^{\rm dl}$, and $\underline{\bf z}^{\rm dl} = {\bf U}^H{\bf z}^{\rm dl}$. Then, the LS estimation of the downlink gain of the $l$th estimated path is
\begin{equation}\label{Eq:DLSVDgl}
\underline{\hat g}^{\rm dl}_l = \underline{g}^{\rm dl}_l + \frac{\lambda_l^*}{\sqrt{P}|\lambda_l|^2}\underline{z}^{\rm dl}_l.
\end{equation}
where $\underline{g}^{\rm dl}_l$ and $\underline{z}^{\rm dl}_l$ are the $l$th elements of $\underline{\bf g}^{\rm dl}$ and $\underline{\bf z}^{\rm dl}$, respectively. The normalized MSE (NMSE) of the estimated downlink gains is defined as
\begin{equation}\label{Eq:DLSVDglMSEdefine}
{\rm NMSE} = \frac{\mathbb{E}\{ \|\underline{\hat{\bf g}}^{\rm dl} - \underline{\bf g}^{\rm dl}\|^2\}}{ \|\underline{\bf g}^{\rm dl}\|^2 },
\end{equation}
which can be further derived as
\begin{equation}\label{Eq:DLSVDglMSE}
{\rm NMSE} =  \frac{1}{P\|\underline{\bf g}^{\rm dl}\|^2} \sum_{l=1}^{\hat L}\frac{\mathbb{E}\{|\underline{z}^{\rm dl}_l|^2\}}{|\lambda_l|^2}.
\end{equation}
Here, we assume a successful estimation of the downlink gains by evaluating the NMSE.

\begin{assumption}\label{def:NMSE}
Estimation of the downlink gains of a user is successful when the NMSE is below a tolerated error rate $\delta$, where $0<\delta\ll1$.
\end{assumption}

Now, we attempt to explore a deeper insight into \eqref{Eq:DLSVDglMSE}. The unitary characteristic of ${\bf U}$ determines that the new noise vector $\underline{\bf z}^{\rm dl}$ holds the same statistics with ${\bf z}^{\rm dl}$. Thus, the elements of $\underline{\bf z}^{\rm dl}$ are also i.i.d.~with zero mean and unit variance, that is, $\mathbb{E}\{|\underline{z}^{\rm dl}_l|^2\}=1$. According to \cite{Xie2018}, the spatial power difference between the uplink and downlink is small. We use $\|\hat{\bf g}^{\rm ul}\|^2$ to approximate $\|\underline{\bf g}^{\rm dl}\|^2$. Then, the NMSE of the gains can be estimated by
\begin{equation}\label{Eq:DLSVDglMSEnew}
\widehat{{\rm NMSE}} = \frac{1}{P\|\hat{\bf g}^{\rm ul}\|^2} \sum_{l=1}^{\hat L}\frac{1}{|\lambda_l|^2}.
\end{equation}
According to \emph{Assumption \ref{def:NMSE}}, we can predict that the downlink gain estimation will be successful if
\begin{equation}\label{Eq:EstSuccess}
\frac{1}{P\|\hat{\bf g}^{\rm ul}\|^2} \sum_{l=1}^{\hat L}\frac{1}{|\lambda_l|^2}<\delta.
\end{equation}
Subsequently, we formulate the requirement for a successful estimation as follows.

\begin{assumption}\label{def:require}
In multiuser FDD massive MIMO systems, the requirement for successful estimation of downlink gains is that
\begin{equation}\label{Eq:DLSVDglMSEthre}
\sum_{l=1}^{\hat L_k}\frac{1}{|\lambda_{l,k}|^2} < \delta P\|\hat{\bf g}^{\rm ul}_k\|^2
\end{equation}
holds for user $k=1,\ldots,K$, where $\lambda_{l,k}$ is the $l$th singular value of the coefficient matrix ${\bf A}_k$ of user $k$ which is defined in \eqref{Eq:DLmtxA}-\eqref{Eq:DLmtxAsub}.
\end{assumption}

This assumption guides our design of the downlink-training strategy. In the following subsection, we will use \eqref{Eq:DLSVDglMSEthre} to design downlink-training beams ${\bf b}_t, t=1,\ldots,T_p$ for achieving successful estimation of the downlink gains.

\subsection{Spatial Angle Grid}

If only one user exists in the cell, then all the downlink pilots can be dedicated to this user and beamformed to the spatial angles estimated in the uplink. In this condition, $T_p = \hat L$, ${\bf b}_t = {\bf a}^*(\hat\theta_l,\hat\phi_l)/\sqrt{M}$. Consequently, $\Theta^{t,t}$ is equal to its maximum value $\sqrt{M}$. From \eqref{Eq:DLSVDglMSEnew}, we can see that the estimation error is inversely proportional to $|\lambda_1|^2, \ldots, |\lambda_{\hat L}|^2$. If $\Theta^{t,t}\gg\Theta^{l,t}$ for $l\ne t$, then $|\lambda_1|,\ldots,|\lambda_{\hat L}|$ approaches to $\sqrt{M}$. We can obtain precise estimation results with extremely small error.

A simple extension to multiuser scenario is to execute the single user estimation process for each user individually. In particular, $T_p = \sum_{k=1}^K {\hat L_k}$ OFDM symbols will be occupied by the pilots. However, this method will be extremely time consuming when the number of users increases. If the training time exceeds the channel coherent time, that is, $T_p \ge T_c$, then the effectiveness of the reconstruction results will be lost.

To ensure a balance between precision and efficiency, we introduce a spatial angle grid and select beamforming directions from the angles on the grid. The spatial angle grid covers the entire space and uniformly samples the spatial directions. Each spatial angle on the grid corresponds to a downlink-training beam. We have defined a codebook with similar usage in eNOMP algorithm. As shown in Fig.~\ref{Fig:Codebook}, each vertical plane represents a lower dimensional sub-codebook that uniformly samples the entire 3D space. Therefore, we can reuse this sub-codebook as the spatial angle grid.

\emph{Remark}: We set $\beta_{\theta}=\beta_{\phi}=1$ to restrict the quantity of downlink-training beams. Notably, if we use an oversampled grid, the probability that a grid point is shared by more than one user greatly decreases, and the number of selected grid points enlarges. Besides, we do not need to precisely target the the downlink pilots on the spatial directions of the paths.  We can successfully estimate the downlink gains only if \eqref{Eq:DLSVDglMSEthre} is satisfied.

For the $i$th grid point, $i=1,\ldots,M_vM_h$, we denote $(\bar\theta,\bar\phi)_i = (\bar\theta_{i_v},\bar\phi_{i_h})$, where the corresponding downtilt and azimuth are
\begin{equation}\label{Eq:GridAngle}
\begin{aligned}
\bar\theta_{i_v} &= \frac{\pi}{M_v}\left(i_v-\frac{M_v}{2}-1\right), \\ \bar\phi_{i_h} &= \frac{\pi}{M_h}\left(i_h-\frac{M_h}{2}-1\right).
\end{aligned}
\end{equation}
In addition, the corresponding indices of the sampled downtilt and azimuth angles are
\begin{equation}\label{Eq:AngleNumber}
i_v = \Big\lceil \frac{i}{M_h} \Big\rceil, \quad i_h = i-M_h(i_v-1),
\end{equation}
respectively. The $i$th grid point corresponds to the beamforming vector ${\bf a}^*(\bar\theta_{i_v},\bar\phi_{i_h})/\sqrt{M}$.

Each estimated pair of downtilt and azimuth angles is rounded to its ``nearest'' spatial grid point with the largest projected power from the original estimated angle pair. For the $l$th estimated path of user $k$, the projected power on grid point $(\bar\theta,\bar\phi)$ is defined as
\begin{equation}\label{Eq:ProjPower}
\rho^{k,l}(\bar\theta,\bar\phi) = \frac{1}{M} {\left|{\bf a}^T(\hat\theta_{k,l},\hat\phi_{k,l}){\bf a}^*(\bar\theta,\bar\phi)\right|}^2.
\end{equation}
To enhance the receiving power at the user side, we calculate the projected power on all the grid points, select the one with the maximum value
\begin{equation}\label{Eq:SelectGrid}
(\bar\theta^{k,l},\bar\phi^{k,l})=\mathop{\arg\max}_{\tiny \begin{array}{c} \bar\theta=\bar\theta_1,\dots,\bar\theta_{M_v}\\ \bar\phi=\bar\phi_{1},\dots,\bar\phi_{M_h}\end{array}} \rho^{k,l}(\bar\theta,\bar\phi)
\end{equation}
and mark it as the optimal grid point.

The main advantages of using a spatial angle grid are listed as follows. (1) When different users mark the same grid point, the corresponding training beam can be shared by these users simultaneously. As a result, the downlink dedicated pilots are translated to downlink broadcast pilots, and the number of required pilots is decreased. (2) The estimated angle pair is projected to an optimal grid point and the projected power is sufficiently large to guarantee high receiving signal-to-noise ratio (SNR) of the downlink pilots.

\subsection{Downlink-Training Strategy}

After the projected grid points from each user are known, the BS records all the marked grid points, includes them in the selected grid angle set, and obtains $\Phi = \{(\bar\theta,\bar\phi)_{i^1},\ldots, (\bar\theta,\bar\phi)_{i^S}\}$. Then, $S$ OFDM symbols will be occupied by the downlink pilots for the estimation of the downlink gains of each user. In a massive MIMO system, the number of users $K$ is usually large. Accordingly, $S$ is large as well. To ensure that sufficient time is retained for data transmission within the channel coherent time, we should reduce the overhead for training.

\subsubsection{Reasons for the Reduction in Training Overhead}

As mentioned above, we mark an optimal grid point for each estimated angle pair to achieve high receiving SNR. According to \cite{Xie2017}, when the number of antennas is large but not unlimited, the estimated angle pair has considerable projected power on more than one grid angle. Therefore, if the transmit power is sufficiently large, then a suboptimal grid point with the second or third largest projected power is also adequate to ensure the high receiving power of pilots. Moreover, when the projected power on other grid points is sufficiently large and the resulting singular values of $\bf A$ satisfy \eqref{Eq:DLSVDglMSEthre}, the suboptimal options work as well. These observations give the possibility to further reduce the grid points included in $\Phi$.
To better understand our idea, we first discuss two specific cases that can reduce the selected grid points.

\emph{Case 1: Share a common grid point.}
We suppose only one user exists in the cell. The propagation paths are spatially-closed with each other. These paths are projected on different optimal grid points but share a common suboptimal grid point. According to \eqref{Eq:DLmtxA}, if the BS beamforms the downlink pilots only on the direction of this common suboptimal grid point, then the coefficient matrix ${\bf A}$ becomes
\begin{equation}\label{Eq:DLmtxAspec2}
{\bf A}=\left[ \Theta^{1,1}{\bf p}_p(\hat\tau_1), \Theta^{2,1}{\bf p}_p(\hat\tau_2),\ldots,\Theta^{\hat L,1}{\bf p}_p(\hat\tau_{\hat L})  \right].
\end{equation}
Notably, ${\bf p}_p(\hat\tau_l)$ is a vector with completely different elements, and $\hat\tau_l$ is not equal to each other for $l=1,\ldots,\hat L$. The probability that ${\bf A}$ satisfies \eqref{Eq:DLSVDglMSEthre} is high, and the downlink gains can still be successfully estimated. Thus, $\hat L-1$ grid points in $\Phi$ are redundant and can be replaced by a single grid point. In this condition, $|\Phi|$ equals 1 rather than $\hat L$, thereby saving training overhead by $\hat L-1$ OFDM symbols.

\emph{Case 2: Utilize grid points selected by other users.}
In this case, we suppose two users exist, and each user only has one propagation path.
We assume that user 1 can receive the broadcast downlink pilot that is beamformed to the optimal grid point selected by user 2. Therefore, this pilot can be utilized by user 1 as well. If the resulting singular values of $\bf A$ satisfy \eqref{Eq:DLSVDglMSEthre}, then the original optimal grid point of the path of user 1 can be removed from $\Phi$.
In this condition, the estimation accuracy will not be degenerated considerably if a user receives downlink pilots that are beamformed to neither optimal nor suboptimal grid angles.


\begin{figure}
  \centering
  \includegraphics[scale=1]{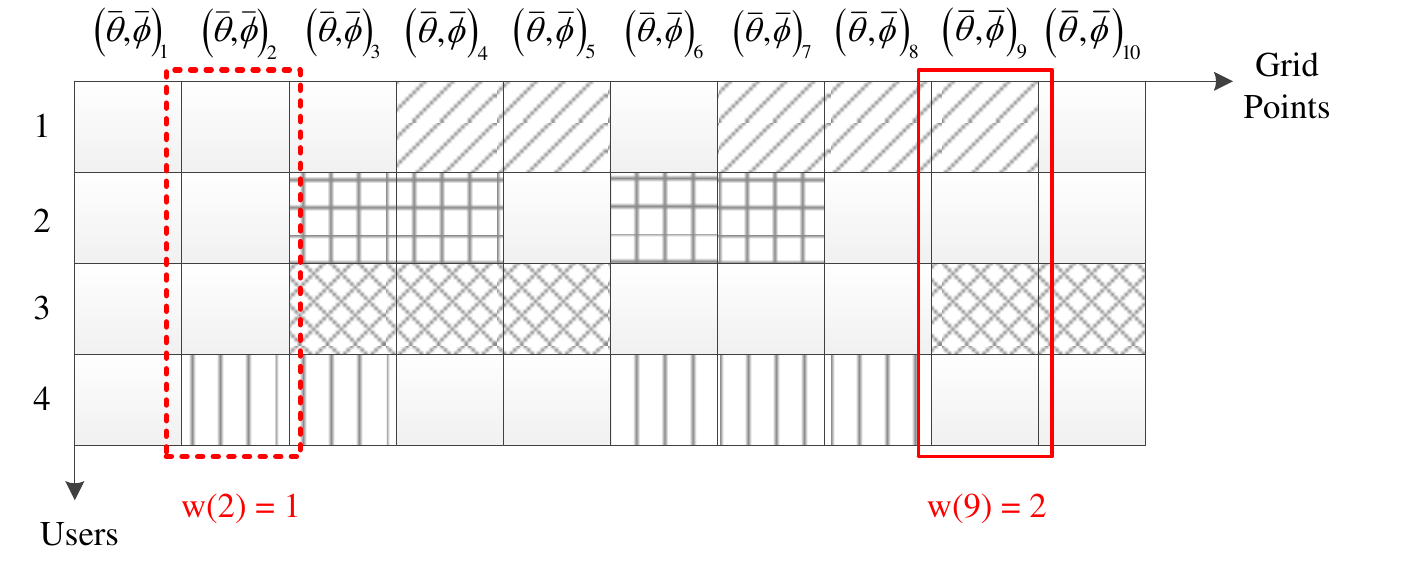}\\
  \caption{Weights of grid points. Grid point $(\bar\theta,\bar\phi)_2$ has the first minimum weight and will thus be abandoned first.}\label{Fig:BeamWeight}
\end{figure}

\subsubsection{Beam Scheduling Scheme}

The two cases indicate that multiple paths can share a common grid point and that each user can utilize the grid points selected by other users. These findings provide reasons for the reduction in training overhead. Based on these findings, we propose a beam scheduling scheme to exclude the redundant grid points in $\Phi$ and only retain the grid points that can be shared by the paths of all the user channels.
The target of the beam scheduling scheme is to minimize the number of selected grid points in $\Phi$:
\begin{equation}\label{Eq:Scheme}
\arg \min |\Phi|, \quad {\rm s.t.} \quad \eqref{Eq:DLSVDglMSEthre} \quad{\rm holds \quad for}\quad k=1,\ldots,K.
\end{equation}

If we exhaustively search all the possible grid angle subsets of $\Phi$ to find the optimum subset, then the searching time will be extremely high. Thus, we follow a similar approach as the greedy method but use it in a reverse way. The redundant grid angles will be identified and excluded from $\Phi$ one by one. According to \emph{Assumption \ref{def:require}}, if \eqref{Eq:DLSVDglMSEthre} holds for all the $K$ users and no more grid points can be excluded in $\Phi$, then we obtain the final set of selected grid points that will be transformed to the downlink-training beams.

We introduce the concept of ``weight'' in the removal of redundant grid angles. If grid point $(\bar\theta,\bar\phi)_i$ is chosen by $\bar K$ users, then its weight is denoted by $w(i)=\bar K$. The large weight of a grid point shows that it is shared by plenty of users. In principle, we keep the grid points that are shared by several users and abandon the grid points that are chosen by only one user. For example, the weights of grid points $(\bar\theta,\bar\phi)_2$ and $(\bar\theta,\bar\phi)_9$ in Fig.~\ref{Fig:BeamWeight} are 1 and 2, respectively. Grid point $(\bar\theta,\bar\phi)_2$ has the first smallest weight and will therefore be abandoned first.

Specifically, we compare $w(i^1),\ldots,w(i^S)$ and reorder $\Phi$ in increasing weight, that is, $w(j^1) \leq w(j^1) \leq \cdots \leq w(j^S)$ and $\Phi = \{(\bar\theta,\bar\phi)_{j^1},\ldots, (\bar\theta,\bar\phi)_{j^S}\}$.
Following the order, we keep excluding the front grid points with small weights until any grid point in $\Phi$ is indispensable to achieve \eqref{Eq:DLSVDglMSEthre}.
The working principle of the beam scheduling scheme is summarized in Algorithm~\ref{scheduling}. We denote the indices of the remaining grid point as $j^{S-T_p+1},\ldots,j^S$. Then, the downlink-training beams are ${\bf b}_t = {\bf a}^*(\bar\theta,\bar\phi)_{j^t}/\sqrt{M}$, $t=S-T_p+1,\ldots,S$.

The complexity of Algorithm~\ref{scheduling} increases in proportion to the number of users. However, if the users are spatially close to one another and their optimal grid points are almost the same, then Algorithm~\ref{scheduling} needs to exclude only a few grid points because the size of the initial grid point set $\Phi$ is small. Under this condition, the complexity of Algorithm 2 does not increase sharply with the increased number of users. However, if the users are spatially separate from one another, the size of the initial grid point set $\Phi$ is large. Then, the complexity increases because the number of grid points with small weights is large.

Before the start of the downlink pilot transmission phase, each selected user is informed with its delays, angles, and the grid angles for the downlink pilots. Thereafter, the BS broadcasts cell-common pilots in successive $T_p$ OFDM symbols.
Each user receives all the pilots and calculates their downlink gains using \eqref{Eq:DLestg}.
The estimated downlink gains are sent back to the BS. By utilizing the uplink-estimated downtilts, azimuths, delays, and the downlink-estimated gains, the BS reconstructs the downlink channel of user $k$ as
\begin{equation}\label{Eq:Reconstruction}
\hat {\bf h}^{\rm dl}_k=\sum\limits_{l = 1}^{{\hat L_k}} {\hat g^{\rm dl}_{k,l} {\bf a}^T(\hat\theta_{k,l},\hat\phi_{k,l})\otimes {\bf p}^T(\hat\tau_{k,l}) e^{j2\pi (f^{\rm dl}-f^{\rm ul}) \hat\tau_{k,l}}}.
\end{equation}
where $\hat g^{\rm dl}_{k,l}$ is the estimated downlink gain of the $l$th path of user $k$. Finally, the BS obtains $\hat{\bf h}^{\rm dl}_1,\ldots,\hat{\bf h}^{\rm dl}_K$.

\begin{algorithm}[t]
\caption{\textbf{Beam Scheduling Strategy}}
\textbf{Require:} Select grid point set $\Phi$\\
\textbf{Initialize:} $\Phi = \{(\bar\theta,\bar\phi)_{i^1},\ldots, (\bar\theta,\bar\phi)_{i^S}\}$ \\
1: Calculate the weight of each grid point in $\Phi$\\
2: Reorder the grid points in $\Phi$ in increasing order and obtain $\Phi = \{(\bar\theta,\bar\phi)_{j^1},\ldots, (\bar\theta,\bar\phi)_{j^S}\}$\\
3: Set ${\rm flag}=1$, $s=1$\\
\textbf{while} ${\rm flag}$ \textbf{do}
\begin{enumerate}
  \item Set $\Phi_{\rm temp} = \{\Phi\backslash (\bar\theta,\bar\phi)_{j^s}\}$
  \item \textbf{for} $k=1,\ldots,K$ \textbf{do}
  \begin{enumerate}
        \item \textbf{if} \eqref{Eq:DLSVDglMSEthre} does not hold for user $k$ when applying $\Phi_{\rm temp}$\\
              \qquad \qquad Set ${\rm flag}=0$, $T_p=S-s+1$, \textbf{break}
  \end{enumerate}
  \item \textbf{if} ${\rm flag}=1$\\
        \qquad \qquad Set $\Phi =\Phi_{\rm temp}$, $s=s+1$
\end{enumerate}
\textbf{Output:} $\Phi$
\label{scheduling}
\end{algorithm}

\subsection{Cost Evaluation}
Here, we evaluate the total cost of the proposed downlink channel reconstruction scheme, including the training and the feedback overhead. We compare the proposed reconstruction with the LMMSE channel estimation method in FDD massive MIMO systems. Table \ref{tab:cost} shows a brief comparison of the costs of these two schemes.

\begin{table}
\caption{Cost Comparison}
\label{tab:cost}
\centering
  \begin{tabular}{|l|c|c|}
  \hline
   & Training & Feedback \\
   & (OFDM symbols) & (complex numbers)\\
  \hline
  LMMSE & $M$ & $MNK$\\
  \hline
  Reconstruction & $T_p$ & $\sum_{k=1}^K {\hat L_k}$\\
  \hline
  \end{tabular}
\end{table}

\subsubsection{Training Overhead}

For the widely used LMMSE channel estimation method, $M$ orthogonal downlink pilots are required to distinguish from the $M$ BS antenna elements. To align with orthogonal pilot design of the downlink-training strategy, the training overhead for LMMSE estimation is $M$ OFDM symbols. As for the proposed reconstruction scheme, the training overhead is $T_p$ OFDM symbols. If we loosen the requirement on the estimation accuracy and increase $\delta$ even in the worst case that $S=M$, most of the beams in the initial $\Phi$ is redundant and will be excluded from $\Phi$. Thus, we obtain that $T_p \ll M$.

\subsubsection{Feedback Overhead}

After the downlink channel is estimated using LMMSE channel estimation method, each user sends the $M$-dimensional complex channel vectors on all the $N$ subcarriers to the BS. Thereafter, the feedback overhead of the LMMSE estimation equals $MNK$ complex numbers. For the proposed downlink channel reconstruction scheme, each user only needs to send back the estimated downlink gains. Thus, the feedback overhead of the proposed scheme is $\sum_{k=1}^K {\hat L_k}$ complex numbers.

\emph{Example}: If $M=128$, $K=10$, and $L_1=\cdots=L_K=6$, then the value range of $T_p$ is [12, 55] according to the numerical results in Section VI. The training overhead of LMMSE estimation is 128 OFDM symbols, and that of the proposed downlink channel reconstruction scheme is 12$-$55 OFDM symbols. If $N=256$, then the LMMSE estimation needs to send back $128\times256\times10$ complex numbers. The proposed reconstruction scheme only feeds back about $6\times10$ complex numbers, far less than that of using LMMSE method.

\section{Multiuser Sum-rate Analysis}\label{Sec:MUscheme}

With the downlink reconstructed channels of all the users, the BS conducts user scheduling and designs data transmission. If the number of users is much smaller than the number of BS antennas, then user scheduling is not necessary, and these users can be simultaneously served. In this section, we analyze the sum-rate based on the reconstructed channels.

During the downlink transmission phase, the BS sends data streams to all the $K$ users simultaneously. To overcome interference, the BS adopts zero-forcing (ZF) precoding before transmitting downlink data. The ZF precoder on downlink subcarrier $n$ is expressed as
\begin{equation}\label{Eq:ZFprecoder}
{\bf W}(n) = \hat{\bf H}^{\dagger}(n) {\bf \Lambda}(n),
\end{equation}
where $\hat{\bf H}(n) \in \mathbb{C}^{K\times M}$ is the reconstructed multiuser channel matrix on downlink subcarrier $n$,
\begin{equation}\label{Eq:esthk}
[\hat{\bf H}(n)]_{k,:}=\hat{\bf h}^{\rm dl}_k(n) = \sum\limits_{l = 1}^{{\hat L_k}} {\hat g^{\rm dl}_{k,l} {\bf a}^T(\hat\theta_{k,l},\hat\phi_{k,l})}e^{j2\pi (f^{\rm dl}-f^{\rm ul}+n\triangle f) \hat\tau_{k,l}},
\end{equation}
and ${\bf \Lambda}(n) = {\rm diag}\{\alpha_1,\ldots,\alpha_K\}$ is the normalization matrix. Here, we adopt uniform power allocation strategy and set
\begin{equation}\label{Eq:ZFpower}
\alpha_k = \frac{1}{\sqrt{K} \|[\hat{\bf H}^{\dagger}(n)]_{:,k}\|}.
\end{equation}

For user $k$, its received data on subcarrier $n$ comprise the target data and the inter-user interference, that is,
\begin{equation}\label{Eq:DLdata1}
r^{\rm dl}_k(n) = \sqrt{P}{\bf h}^{\rm dl}_k(n)\left[{\bf W}(n)\right]_{:,k}d_k(n)+\sum\limits_{j \ne k} \sqrt{P}{{\bf h}^{\rm dl}_k(n)\left[{\bf W}(n)\right]_{:,k}d_j(n)} + z^{\rm dl}_k(n),
\end{equation}
where $r_k(n)$ is the received data by user $k$ on subcarrier $n$, $P$ is the total downlink transmit power, $d_k(n)$ is the transmit data with unit power, and $z_k(n)$ is the noise with unit variance. When applied \eqref{Eq:ZFprecoder}, we rewrite \eqref{Eq:DLdata} as
\begin{equation}\label{Eq:DLdata}
r^{\rm dl}_k(n) = \sqrt{P}\alpha_k{\bf h}^{\rm dl}_k(n)[\hat{\bf H}^{\dagger}(n)]_{:,k}d_k(n)+\sum\limits_{j \ne k} \sqrt{P}\alpha_j{{\bf h}^{\rm dl}_k(n)[\hat{\bf H}^{\dagger}(n)]_{:,k}d_j(n)} + z^{\rm dl}_k(n).
\end{equation}

Considering the cost for the downlink channel reconstruction, the average multiuser sum-rate of this system is calculated as
\begin{equation}\label{Eq:DLsumrate}
R = \left(1-\frac{T_p}{T_c}\right)\frac{1}{N}\sum\limits_{n=1}^{N} \sum\limits_{k=1}^{K} \log_2(1+{\rm SINR}_k(n)),
\end{equation}
where ${\rm SINR}_k(n)$ is the SINR at user $k$ on subcarrier $n$ and can be expressed as
\begin{equation}\label{Eq:SINRdef}
{\rm SINR}_k (n) = \frac{P \alpha_k^2|{\bf h}^{\rm dl}_k (n)[\hat{\bf H}^{\dagger}(n)]_{:,k} |^2}{\sum\limits_{j \ne k} {P\alpha_j^2 |{\bf h}^{\rm dl}_k(n)[\hat{\bf H}^{\dagger}(n)]_{:,j}|^2}+ |z^{\rm dl}_k(n)|^2}.
\end{equation}

Notably, the multiuser sum-rate is proportional to SINR but inversely proportional to $T_p$. We first evaluate the achievable rate of a single user $k$ by analyzing its SINR. To simplify the expressions, we neglect the subcarrier index $n$ in the following derivations. \emph{Theorem \ref{Thrm:SINR}} provides the approximation of the expectation of ${\rm SINR}_k$ when the acceptable error rate equals $\delta$.

\begin{theorem}\label{Thrm:SINR}
The expectation of the SINR at user $k$ approximates
\begin{equation}\label{Eq:ESINRtheo}
\mathbb{E}\{{\rm SINR}_k\} \approx \frac{S_k}{\sum_{j\ne k}I_{k,j}+1},
\end{equation}
where
\begin{equation}\label{Eq:ESINRtheo_Sk}
S_k = \frac{P(1 + \delta\sum_{m=1}^{M} |\left[{\bf H}\right]_{k,m}\left[{\bf H}^{\dagger}\right]_{m,k}|^2)}{K\left(\|\left[{\bf H}^{\dagger}\right]_{:,k}\|^2 + \delta\sum_{j=1}^{K}{\|\left[{\bf H}^{\dagger}\right]_{:,j}\|^2 \sum_{m=1}^{M}{|\left[{\bf H}\right]_{j,m}\left[{\bf H}^{\dagger}\right]_{m,k}|^2} }\right)}
\end{equation}
can be viewed as the expected signal power, and
\begin{equation}\label{Eq:ESINRtheo_Ik}
I_{k,j} = {\frac{P(\delta\sum_{m=1}^{M} |\left[{\bf H}\right]_{k,m}\left[{\bf H}^{\dagger}\right]_{m,j}|^2)}{K\left(\|\left[{\bf H}^{\dagger}\right]_{:,j}\|^2 + \delta\sum_{i=1}^{K}{\|\left[{\bf H}^{\dagger}\right]_{:,i}\|^2 \sum_{m=1}^{M}{|\left[{\bf H}\right]_{i,m}\left[{\bf H}^{\dagger}\right]_{m,j}|^2} }\right)}}
\end{equation}
is viewed as the expected interference power related to user $j$.
\end{theorem}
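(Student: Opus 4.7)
My plan is to start by decomposing the true channel as $\mathbf{h}_k = \hat{\mathbf{h}}_k + \mathbf{e}_k$, where $\mathbf{e}_k$ denotes the reconstruction-error row vector (I will drop the subcarrier index $n$ as the theorem does). By Assumption~\ref{def:NMSE}, the normalized mean-squared error of the estimated gains is at most $\delta$, and since the reconstruction in \eqref{Eq:Reconstruction} is linear in those gains, I would translate this into an element-wise second-moment model of the form $\mathbb{E}\{|[\mathbf{E}]_{k,m}|^2\}\approx \delta\,|[\mathbf{H}]_{k,m}|^2$, with $\mathbf{e}_k$ zero-mean and (to leading order) uncorrelated with $\hat{\mathbf{H}}$. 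This is the crucial physical input that lets $\delta$ propagate into the final SINR expression.

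Next, I would exploit the ZF identity $\hat{\mathbf{H}}\hat{\mathbf{H}}^{\dagger}=\mathbf{I}_K$ to rewrite the inner products appearing in \eqref{Eq:SINRdef}:
\begin{equation}
\mathbf{h}_k[\hat{\mathbf{H}}^{\dagger}]_{:,k}=1+\mathbf{e}_k[\hat{\mathbf{H}}^{\dagger}]_{:,k},\qquad \mathbf{h}_k[\hat{\mathbf{H}}^{\dagger}]_{:,j}=\mathbf{e}_k[\hat{\mathbf{H}}^{\dagger}]_{:,j},\ j\neq k.
\end{equation}
Taking magnitudes squared and using zero mean of $\mathbf{e}_k$ gives
\begin{equation}
\mathbb{E}\{|\mathbf{h}_k[\hat{\mathbf{H}}^{\dagger}]_{:,k}|^2\}\approx 1+\delta\sum_{m=1}^M|[\mathbf{H}]_{k,m}[\hat{\mathbf{H}}^{\dagger}]_{m,k}|^2,
\end{equation}
and a similar purely-$\delta$ expression for the interference inner product, after using the element-wise covariance model to collapse the quadratic form $\mathbb{E}\{|\mathbf{e}_k\mathbf{w}|^2\}=\sum_m\mathbb{E}\{|[\mathbf{E}]_{k,m}|^2\}|w_m|^2$.

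The SINR is a ratio of random quantities, so I would invoke the standard first-order approximation $\mathbb{E}\{X/Y\}\approx \mathbb{E}\{X\}/\mathbb{E}\{Y\}$, valid when the relative fluctuations of $Y$ are small (which is the small-$\delta$ regime). Plugging in the above expressions and the normalization $\alpha_j^2=1/(K\|[\hat{\mathbf{H}}^{\dagger}]_{:,j}\|^2)$, the "$+1$" noise term in the denominator of \eqref{Eq:SINRdef} separates cleanly and becomes the "$+1$" of the claimed expression \eqref{Eq:ESINRtheo}. The signal and each interference term then inherit the factors $P/K$ in the numerator of $S_k$ and $I_{k,j}$.

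The main obstacle will be producing the extra $\delta$-dependent correction in the denominators of $S_k$ and $I_{k,j}$, namely $\delta\sum_j\|[\mathbf{H}^{\dagger}]_{:,j}\|^2\sum_m|[\mathbf{H}]_{j,m}[\mathbf{H}^{\dagger}]_{m,k}|^2$. This does not come from the inner products themselves but from replacing $\|[\hat{\mathbf{H}}^{\dagger}]_{:,k}\|^2$ (inside $\alpha_k^2$) with its expected value expressed in terms of the \emph{true} channel $\mathbf{H}$. To do this I would perturb the pseudo-inverse around $\hat{\mathbf{H}}=\mathbf{H}$ to first order in $\mathbf{E}$, giving $\hat{\mathbf{H}}^{\dagger}\approx \mathbf{H}^{\dagger}-\mathbf{H}^{\dagger}\mathbf{E}\mathbf{H}^{\dagger}$, compute $\mathbb{E}\{\|[\hat{\mathbf{H}}^{\dagger}]_{:,k}\|^2\}$ using the same element-wise covariance model for $\mathbf{E}$, and recognize the resulting double sum as exactly the $\delta$-correction appearing in the denominator of $S_k$ (and analogously for $I_{k,j}$). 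The bookkeeping of which perturbation terms survive to first order in $\delta$ versus which are absorbed as higher-order corrections is where the calculation is most delicate; everything else is routine expansion and term collection.
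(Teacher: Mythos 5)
Your proposal is correct and follows essentially the same route as the paper's appendix proof: the ratio-of-expectations approximation for $\mathbb{E}\{{\rm SINR}_k\}$, the additive error model with $\mathbb{E}\{|[{\bf E}]_{k,m}|^2\}\approx\delta|[{\bf H}]_{k,m}|^2$, the first-order pseudo-inverse expansion $\hat{\bf H}^{\dagger}\approx{\bf H}^{\dagger}-{\bf H}^{\dagger}{\bf E}{\bf H}^{\dagger}$ to produce the $\delta$-correction in the normalization $\|[\hat{\bf H}^{\dagger}]_{:,k}\|^2$, and ZF orthogonality to split signal and interference inner products. The only cosmetic difference is that you apply the exact identity $\hat{\bf H}\hat{\bf H}^{\dagger}={\bf I}_K$ before linearizing, whereas the paper uses ${\bf h}^{\rm dl}_k[{\bf H}^{\dagger}]_{:,j}=\delta_{jk}$ on the true channel and then Taylor-expands; to first order in ${\bf E}$ the two bookkeepings yield identical expressions.
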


\begin{proof}
The proof of \emph{Theorem \ref{Thrm:SINR}} can be found in the Appendix.
\end{proof}

From \emph{Theorem \ref{Thrm:SINR}}, we can see that
\begin{equation}\label{Eq:Sk1}
S_k\!=\!\frac{P\left(1 + \delta\sum_{m=1}^{M} |\left[{\bf H}\right]_{k,m}\left[{\bf H}^{\dagger}\right]_{m,k}|^2\right)}{K\|\left[{\bf H}^{\dagger}\right]_{:,k}\|^2\left(1 \!+\! \delta \sum_{m=1}^{M}\!|\left[{\bf H}\right]_{k,m}\left[{\bf H}^{\dagger}\right]_{m,k}|^2\right) \!+\! K\delta X},
\end{equation}
where $X=\sum_{j\ne k}{\|\left[{\bf H}^{\dagger}\right]_{:,j}\|^2 \sum_{m=1}^{M}{|\left[{\bf H}\right]_{j,m}\left[{\bf H}^{\dagger}\right]_{m,k}|^2} } >0$. Since
\begin{equation}\label{Eq:Sk2}
S_k < \frac{\frac{P}{K}\left(1 + \delta\sum_{m=1}^{M}|\left[{\bf H}\right]_{k,m}\left[{\bf H}^{\dagger}\right]_{m,k}|^2\right)}{\|\left[{\bf H}^{\dagger}\right]_{:,k}\|^2\left(1 + \delta{ {\sum_{m=1}^{M}|\left[{\bf H}\right]_{k,m}\left[{\bf H}^{\dagger}\right]_{m,k}|^2} }\right)},
\end{equation}
where the right item can be reduced to $P/{K\|\left[{\bf H}^{\dagger}\right]_{:,k}\|^2}$, we obtain that
\begin{equation}\label{Eq:Sk3}
S_k<\frac{P}{K\|\left[{\bf H}^{\dagger}\right]_{:,k}\|^2},
\end{equation}
which is the expected signal power when $\delta = 0$. The expected signal power degrades if error is acceptable in the LS estimation of the downlink gains. When the value of $\delta$ becomes large, the expected signal power decreases.

Only if $\delta = 0$ and the reconstructed downlink multiuser channel is precise can the interference be completely eliminated. Otherwise, $I_{k,j}>0$, which shows that the interference exists. After rewriting the expression of $I_{k,j}$ by
\begin{equation}\label{Eq:Ik}
I_{k,j} = {\frac{\sum_{m=1}^{M} \frac{P}{K} |\left[{\bf H}\right]_{k,m}\left[{\bf H}^{\dagger}\right]_{m,j}|^2} {\delta^{-1}\|\left[{\bf H}^{\dagger}\right]_{:,j}\|^2 \!+\! \sum_{i=1}^{K}{\|\left[{\bf H}^{\dagger}\right]_{:,i}\|^2 \sum_{m=1}^{M}{|\left[{\bf H}\right]_{i,m}\left[{\bf H}^{\dagger}\right]_{m,j}|^2} }}},
\end{equation}
we find that the interference increases in proportion to $\delta$.

The analytical results above show that, when $\delta$ increases, the signal power decreases and the interference becomes more severe than before, resulting in the considerable degradation in the SINR performance. However, most grids in $\Phi$ are removed by the beam scheduling strategy and the value of $T_p$ is small. Accordingly, the sum-rate increases as well. Therefore, the sum-rate performance remains high even if the requirement on the estimation accuracy is relaxed.

\section{Numerical Results}\label{Sec:Results}

In this section, we evaluate the performance of the downlink channel reconstruction-based FDD massive MIMO transceiver. In this FDD system, $f_{\rm dl}-f_{\rm ul}=300$ MHz, $N=256$, and $\triangle f=75$ kHz. For the UPA at BS, we set $M_v=8$ and $M_h=16$. In each user channel, $T_c=200$ and $L_1=\cdots=L_K=6$. Delays of the paths are randomly distributed in $[0,1/{\triangle f})$. The downtilts and azimuths are randomly distributed in $[-\pi/2,\pi/2)$. Power attenuation occurs during the propagation of a wireless signal, and the total attenuation for each user channel is randomly set within $[0,-10]$ dB. For the e-NOMP algorithm, we set $P_{\rm fa}=10^{-2}$.
The oversampling rates used the in e-OMP step are chosen by jointly considering the resolution of the codebook and the computation complexity. Considering that the values of $N$, $M_h$, and $M_v$ are large, and $N$ is larger than $M_h$ or $M_v$, we set $\beta_{\tau}=1$, and $\beta_{\theta}=\beta_{\phi}=2$ when running the eNOMP algorithm. In the downlink training phase, pilots are uniformly inserted in every four subcarriers.

\subsection{Evaluation of the eNOMP Algorithm}

\begin{figure}
  \centering
  \includegraphics[scale=0.6]{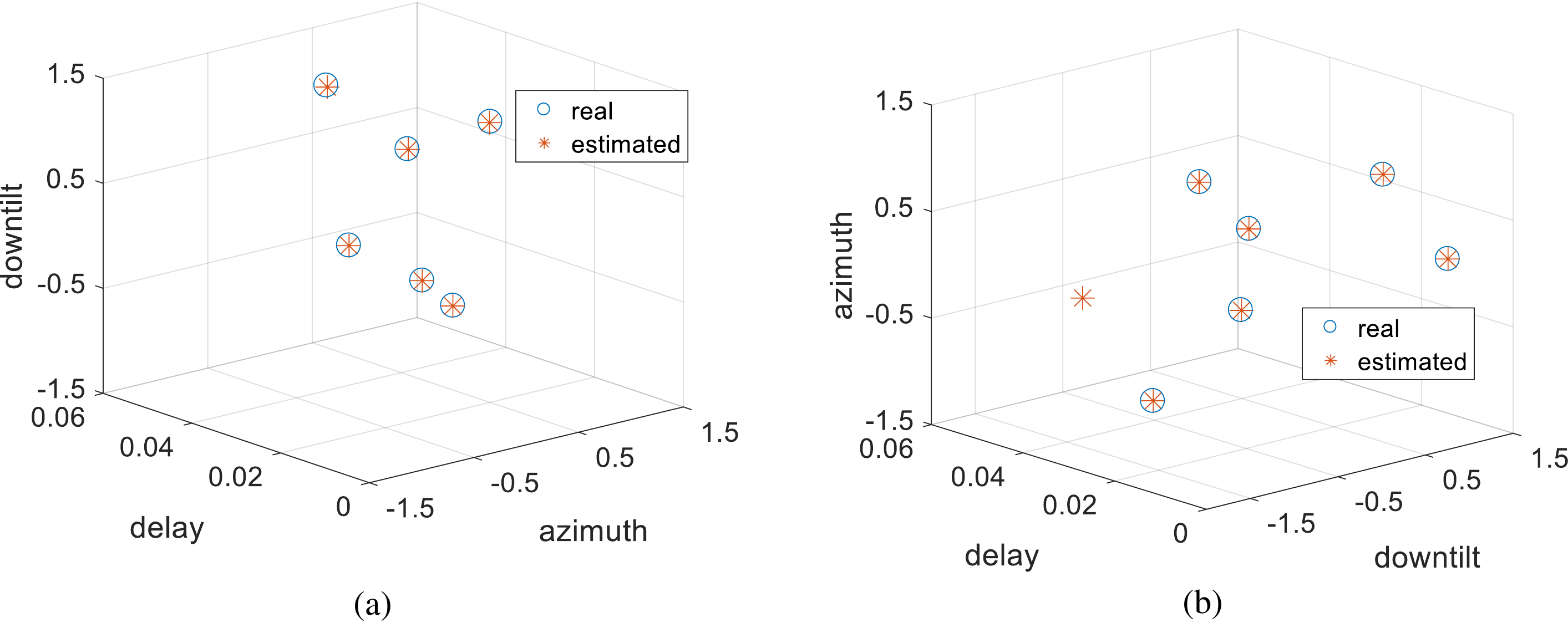}\\
  \caption{Results of two implementations of the eNOMP algorithm. The real frequency-independent parameters are illustrated by circles and their estimates are denoted by stars.}\label{Fig:EstPara3D}
\end{figure}

We evaluate the estimation precision of the eNOMP algorithm by checking if the values of extracted frequency-independent parameters are equal to their real values. Fig.~\ref{Fig:EstPara3D} examines two implementations of the eNOMP algorithm when the channel attenuation and the transmit SNR equal 0 dB. The results are displayed in a 3D coordinate system. The coordinate of each point in the 3D coordinate system is composed of the delay, azimuth, and downtilt. The blue circles denote the real frequency-independent parameters, and the red stars represent their estimates. As shown in Fig.~\ref{Fig:EstPara3D}(a), the estimates coincide with the real values and the number of extracted paths is exactly the real number of paths. Therefore, the eNOMP algorithm can precisely detect each path from the mixture. Fig.~\ref{Fig:EstPara3D}(b) shows that an extra and fake path is detected, which implies that false alarm may occur during the implementation of eNOMP. Considering that the estimated parameters are not sufficiently accurate, the component paths can not be completely eliminated from the mixture. The integration of the residual components will result in a fake component path, which is falsely detected by the algorithm.

\begin{figure}
  \centering
  \includegraphics[scale=0.6]{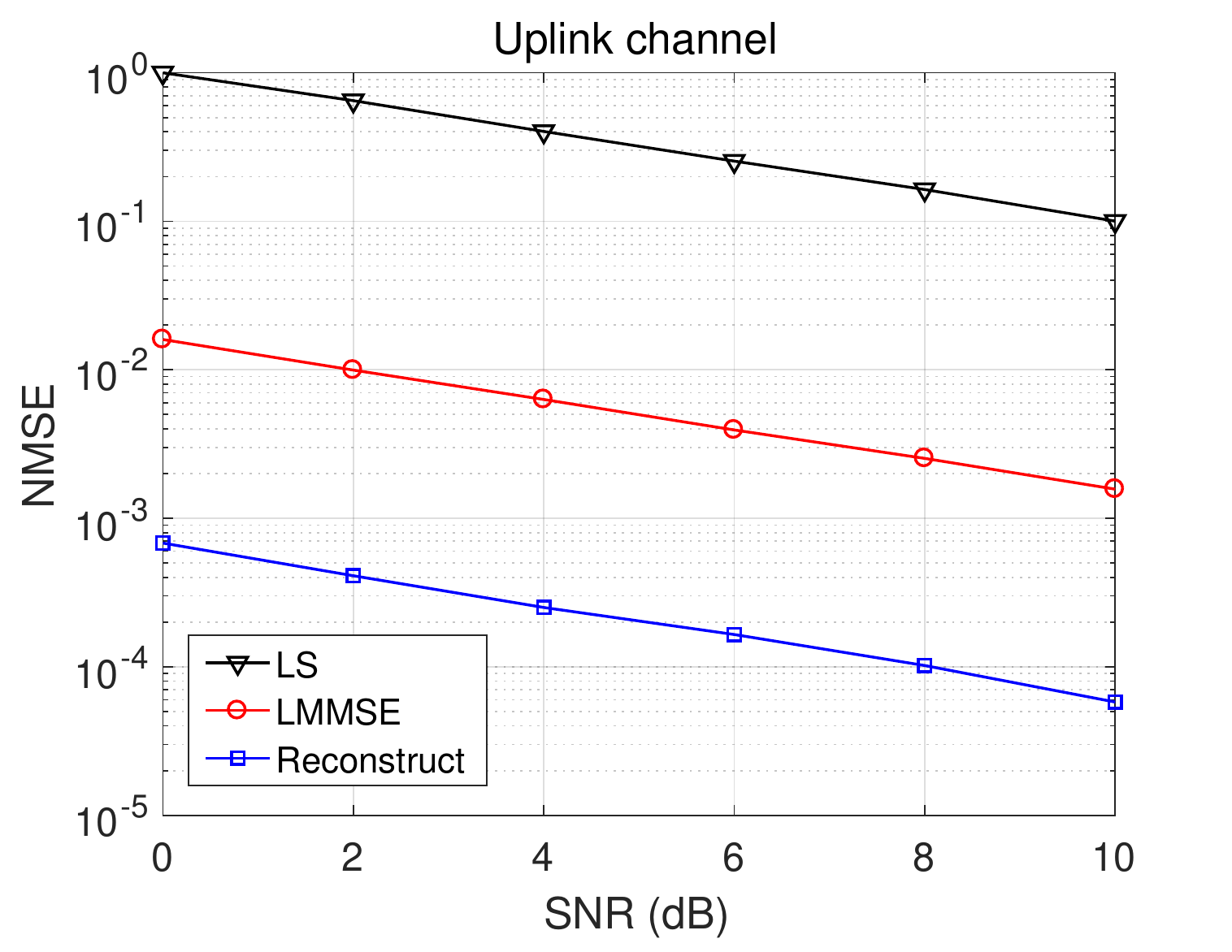}\\
  \caption{NMSE performance of the eNOMP algorithm. The eNOMP-reconstructed uplink channel is more accurate than the LMMSE-estimated uplink channel.}\label{Fig:MSE}
\end{figure}

The occurrence of false alarm is inevitable, but the eNOMP algorithm can still provide a globally accurate reconstruction result. We examine the global accuracy of the eNOMP-based uplink channel reconstruction by evaluating the NMSE, which is calculated as
\begin{equation}\label{Eq:NMSEupchannel}
{\rm NMSE} =\mathbb{E}\left\{ \frac{\|\hat{\bf h}-{\bf h}\|^2}{ \|{\bf h}\|^2}\right\}.
\end{equation}
The classical LS and LMMSE channel estimation methods are introduced as benchmarks. Fig.~\ref{Fig:MSE} compares the NMSE performance of the estimated or reconstructed uplink channels. Here, the wireless channel attenuation is set to 0 dB, and SNR equals the transmit power. As expected, the LS estimated channel has the worst accuracy, and LMMSE improves the performance by a large margin. On the contrary, the eNOMP-based uplink channel reconstruction further reduces the NMSE considerably. When SNR equals 0 dB, eNOMP can obtain $10^{-3}$ NMSE. The value continuously drops with the increase of SNR. These results strongly demonstrate the high global accuracy of the eNOMP algorithm.

\subsection{Evaluation of the Reconstruction-based Transceiver}

We introduce LMMSE channel estimation as a benchmark again and consider its training cost of $M=128$ OFDM symbols to compare the performance of the proposed downlink channel reconstruction-based FDD massive MIMO transceiver. We also evaluate the case when perfect downlink CSI is known at the BS and assume that the training cost is equal to that of the proposed downlink-training strategy. The transmit SNRs in uplink and downlink are equal to 10 dB.

\begin{figure}
  \centering
  \includegraphics[scale=0.6]{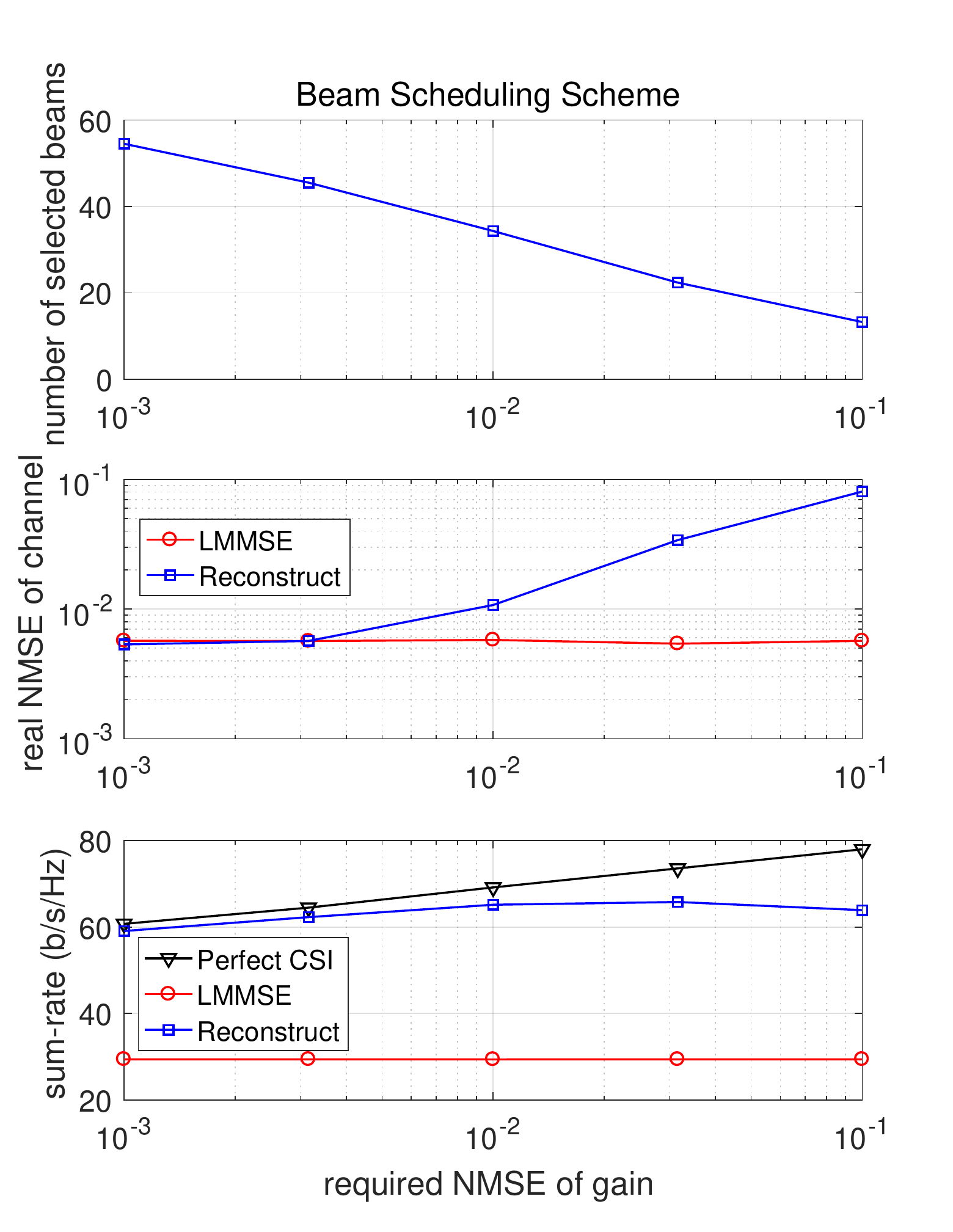}\\
  \caption{Evaluation of the NMSE and multiuser sum-rate performance of the proposed transceiver design with $K=10$. }\label{Fig:rate}
\end{figure}

Fig.~\ref{Fig:rate} evaluates the performance of the transceiver under different requirements on the NMSE of the estimated downlink gains with $K=10$. If we set $\delta=10^{-3}$, then sufficient pilots are available to guarantee the high estimation precision of the downlink gains. Thus, the first sub-figure indicates that seldom grid points are excluded in the selected grid point set. In this condition, NMSE of the reconstructed downlink channel is as small as that of the LMMSE estimated channel when observing the second sub-figure. The real NMSE of the reconstruction scheme is higher than the required NMSE because of a lower bound of the real NMSE, which can be further lowered by enhancing the transmit SNR. Moreover, we find from the third sub-figure that LMMSE estimation has extremely low sum-rate performance due to the large cost of training pilots. This performance gap narrows if the channel's coherence time increases because the time resource for training is not comparable to that of data transmission. The reconstruction scheme has nearly optimal rate performance when compared with the rate of using perfect CSI. Considering that we assume the two methods cost the same amount of training overhead, the proposed scheme can achieve as high SINR as that of using perfect CSI if $\delta$ is small. When $\delta$ increases, the high precision requirement is relaxed gradually. The number of the remaining grid points decreases as well. The reduction in training pilots results in the degradation in accuracy of the estimates. Notably, the real NMSE is nearly the same with the predefined NMSE. The rate performance experiences a slight improvement and reaches the peak when $\delta=10^{-2}$. Therefore, a little sacrifice of NMSE is acceptable, and the reduction in training overhead contributes to the increase in rate. The rate achieved by the reconstruction-based transceiver still approaches that of using perfect CSI. If we further release the NMSE requirement and set $\delta=10^{-1}$, then the rate performance gap between the proposed scheme and using perfect CSI becomes large. Nevertheless, the training cost is decreased to nearly 12 OFDM symbols. As a result, the rate achieved by the proposed scheme is still very high, which demonstrates the efficiency of the proposed scheme.

\section{Conclusion}\label{Sec:Conclusion}

In this paper, we proposed an efficient downlink channel reconstruction-based transceiver for FDD massive MIMO systems. Spatial reciprocity between uplink and downlink was utilized to reduce the training and feedback overhead. We first addressed the problem of extracting downtilts, azimuths, and delays in a 3D massive MIMO-OFDM system by introducing an eNOMP algorithm. Then, we solved the problem of estimating downlink gains for multiple users by utilizing a spatial angle grid and proposing an efficient downlink-training strategy with low overhead. Theoretical analysis revealed the effect of the value of acceptable NMSE on the sum-rate performance. Numerical results proved that downtilts, azimuths, and delays could be precisely estimated using the eNOMP algorithm and that high sum-rate could be achieved by utilizing the reconstructed multiuser downlink channel.

\appendix

According to \eqref{Eq:SINRdef}, the SINR at user $k$ satisfies
\begin{equation}\label{Eq:ESINR}
\mathbb{E}\{{\rm SINR}_k\} \approx \frac{\mathbb{E}\{ P |\alpha_k|^2 |{\bf h}^{\rm dl}_k (n)[\hat{\bf H}^{\dagger}(n)]_{:,k} |^2 \}}{\sum_{j\ne k}\mathbb{E}\{ P |\alpha_j|^2 |{\bf h}^{\rm dl}_k (n)[\hat{\bf H}^{\dagger}(n)]_{:,j} |^2 \}+\mathbb{E}\{|z^{\rm dl}_k|^2\}}.
\end{equation}
Firstly, it is obvious that $\mathbb{E}\{|z^{\rm dl}_k|^2\}=1$. For the signal and interference items, they can be approximated to, respectively,
\begin{equation}\label{Eq:ESINRsig}
\mathbb{E}\{ P |\alpha_k|^2 |{\bf h}^{\rm dl}_k (n)[\hat{\bf H}^{\dagger}(n)]_{:,k} |^2 \} \approx \frac{P\mathbb{E}\{|{\bf h}^{\rm dl}_k (n)[\hat{\bf H}^{\dagger}(n)]_{:,k} |^2\}}{K \mathbb{E}\{\|[\hat{\bf H}^{\dagger}(n)]_{:,k}\|^2\}}
\end{equation}
and
\begin{equation}\label{Eq:ESINRint}
\mathbb{E}\{ P |\alpha_j|^2 |{\bf h}^{\rm dl}_k (n)[\hat{\bf H}^{\dagger}(n)]_{:,j} |^2 \}\approx \frac{P\mathbb{E}\{|{\bf h}^{\rm dl}_k (n)[\hat{\bf H}^{\dagger}(n)]_{:,j} |^2\}}{K \mathbb{E}\{\|[\hat{\bf H}^{\dagger}(n)]_{:,j}\|^2\}}
\end{equation}
when \eqref{Eq:ZFpower} is applied.
Considering the error in the reconstructed multiuser channel, we model the reconstructed multiuser channel by $\hat{\bf H} = {\bf H} + {\bf E}$, where ${\bf H} \in \mathbb{C}^{K\times M}$ is the real downlink channel matrix, $[{\bf H}]_{k,:}={\bf h}^{\rm dl}_k$, and ${\bf E}$ is the reconstruction error with elements that are i.i.d Gaussian with zero mean. We regard the real channel ${\bf H}$ as constant and acquire the expectation of SINR about the reconstruction error ${\bf E}$. According to \eqref{Eq:DLSVDglMSEthre}, the NMSE of $\hat{\bf g}^{\rm dl}_k$ approximates $\delta$, and the NMSE of $\hat{\bf h}^{\rm dl}_k$ approximates $\delta$ as well. Thus, we make the following approximation
\begin{equation}\label{Eq:errorvar}
\mathbb{E}\{| [{\bf E}]_{k,i}|^2\}\approx \delta |[{\bf H}]_{k,i}|^2.
\end{equation}
Besides, $\hat{\bf H}^{\dagger}$ can be Taylor expanded by \cite{Wang2007}
\begin{equation}\label{Eq:HinvTaylor}
\hat{\bf H}^{\dagger} = ({\bf H}+{\bf E})^{\dagger} \approx {\bf H}^{\dagger}-{\bf H}^{\dagger}{\bf E}{\bf H}^{\dagger}.
\end{equation}
Then, the $k$th column of $\hat{\bf H}^{\dagger}$ is expressed as
\begin{equation}\label{Eq:HkinvTaylor}
[\hat{\bf H}^{\dagger}]_{:,k} \approx [{\bf H}^{\dagger}]_{:,k}-{\bf H}^{\dagger}{\bf E}[{\bf H}^{\dagger}]_{:,k}.
\end{equation}
Since
\begin{equation}\label{Eq:ZFcharacter}
{\bf h}^{\rm dl}_k [{\bf H}^{\dagger}]_{:,j} = \left\{ \begin{array}{lr} 1, & j=k,\\ 0, & j \ne k, \end{array} \right.
\end{equation}
we can derive that
\begin{equation}\label{Eq:hkHinvj}
|{\bf h}^{\rm dl}_k[\hat{\bf H}^{\dagger}]_{:,j}|^2 = \left\{ \begin{array}{lr} 1 \!-\! [{\bf H}^{\dagger}]_{:,k}^H[{\bf E}]_{k,:}^H \!-\! [{\bf E}]_{k,:}[{\bf H}^{\dagger}]_{:,k} \!+\! [{\bf H}^{\dagger}]_{:,k}^H[{\bf E}]_{k,:}^H [{\bf E}]_{k,:} [{\bf H}^{\dagger}]_{:,k} , & j=k\\
{[{\bf H}^{\dagger}]}_{:,j}^H [{\bf E}]_{k,:}^H [{\bf E}]_{k,:} [{\bf H}^{\dagger}]_{:,j},& j \ne k \end{array} \right.
\end{equation}
The expectation approximates
\begin{equation}\label{Eq:EhkHinvj}
\mathbb{E}\{|{\bf h}^{\rm dl}_k[\hat{\bf H}^{\dagger}]_{:,j}|^2\}  \approx \left\{ \begin{array}{lr} 1 + \delta\sum_{m=1}^{M} |[{\bf H}]_{k,m}[{\bf H}^{\dagger}]_{m,k}|^2, & j=k\\
\delta\sum_{m=1}^{M} |[{\bf H}]_{k,m}[{\bf H}^{\dagger}]_{m,j}|^2, & j \ne k \end{array} \right.
\end{equation}
when applying \eqref{Eq:errorvar} to \eqref{Eq:hkHinvj}.
Besides,
\begin{equation}\label{Eq:Hinvk2}
\|[\hat{\bf H}^{\dagger}]_{:,k}\|^2 \approx \|[{\bf H}^{\dagger}]_{:,k}\|^2 + [{\bf H}^{\dagger}]_{:,k}^H {\bf E}^H{\bf H}^{\dagger H}{\bf H}^{\dagger}{\bf E} [{\bf H}^{\dagger}]_{:,k} - [{\bf H}^{\dagger}]_{:,k}^H {\bf H}^{\dagger}{\bf E}[{\bf H}^{\dagger}]_{:,k} - [{\bf H}^{\dagger}]_{:,k}^H {\bf E}^H{\bf H}^{\dagger H}[{\bf H}^{\dagger}]_{:,k} ,
\end{equation}
and its expectation satisfies
\begin{equation}\label{Eq:EHinvk2}
\mathbb{E}\{\|[\hat{\bf H}^{\dagger}]_{:,k}\|^2\} \approx \|[{\bf H}^{\dagger}]_{:,k}\|^2 + \delta\sum_{j=1}^{K}{\|[{\bf H}^{\dagger}]_{:,j}\|^2 \sum_{m=1}^{M}{|[{\bf H}]_{j,m}[{\bf H}^{\dagger}]_{m,k}|^2} }.
\end{equation}
By applying \eqref{Eq:EhkHinvj} and \eqref{Eq:EHinvk2} into \eqref{Eq:ESINRsig}, we obtain
\begin{equation}\label{Eq:ES}
\mathbb{E}\{ P |\alpha_k|^2 |{\bf h}^{\rm dl}_k (n)[\hat{\bf H}^{\dagger}(n)]_{:,k} |^2 \}
\approx\frac{ \frac{P}{K}(1 + \delta\sum_{m=1}^{M} |[{\bf H}]_{k,m}[{\bf H}^{\dagger}]_{m,k}|^2)}{\|[{\bf H}^{\dagger}]_{:,k}\|^2 + \delta\sum_{j=1}^{K}{\|[{\bf H}^{\dagger}]_{:,j}\|^2 \sum_{m=1}^{M}{|[{\bf H}]_{j,m}[{\bf H}^{\dagger}]_{m,k}|^2} }}
\end{equation}
which is exactly $S_k$. Similarly,
\begin{equation}\label{Eq:EI}
\mathbb{E}\{ P |\alpha_j|^2 |{\bf h}^{\rm dl}_k (n)[\hat{\bf H}^{\dagger}(n)]_{:,j} |^2 \} \approx \sum_{j\ne k}I_{k,j}.
\end{equation}
Therefore, \eqref{Eq:ESINRtheo} is obtained.

\end{document}